\documentclass[preprint,12pt]{elsarticle}

\usepackage{amsmath,amssymb,amsthm}
\usepackage{mathtools}
\usepackage{booktabs}
\usepackage{xcolor}
\usepackage{tikz}
\usepackage{algorithm}
\usepackage{algpseudocode}
\usepackage[colorlinks=true,linkcolor=blue!70!black,
            citecolor=blue!70!black,urlcolor=blue!70!black]{hyperref}
\usepackage[capitalise]{cleveref}

\newtheorem{theorem}{Theorem}[section]
\newtheorem{lemma}[theorem]{Lemma}
\newtheorem{proposition}[theorem]{Proposition}
\newtheorem{corollary}[theorem]{Corollary}
\newtheorem{definition}[theorem]{Definition}
\newtheorem{remark}[theorem]{Remark}

\newcommand{\T}{\mathcal{T}}
\newcommand{\RC}{\mathsf{R}}   

\newcommand{\Rr}{\mathbb{R}}

\newcommand{\stack}{\Pi}
\newcommand{\GL}{\mathcal{G}_L}
\newcommand{\relay}[2]{\hat{\gamma}_{#1,#2}}
\DeclareMathOperator{\MI}{I}
\DeclareMathOperator{\En}{H}

\journal{Neural Networks}

\begin{document}

\begin{frontmatter}

\title{The Extremum Stack as Optimal Memory for\\
  Rate-Independent Sequence Models:\\
  Information-Theoretic Foundations and Online Complexity}

\author[pwut]{Piotr Frydrych}
\ead{piotr.frydrych@pw.edu.pl}

\affiliation[pwut]{organization={Faculty of Mechatronics,
  Warsaw University of Technology},
  city={Warsaw}, country={Poland}}

\begin{abstract}
A functional on discrete sequences is \emph{rate-independent} if it
is invariant under monotone time reparametrisations: it responds only
to the order of local extrema of its input, not to their timing.  The
central algorithmic object of this class is the \emph{Preisach
extremum stack} $\stack_n$: the nested sequence of alternating local
maxima and minima of $u_{0:n} \in \GL^{n+1}$ that survive the
classical wiping-out rule of the Preisach hysteresis operator.  We
give a complete information-theoretic and algorithmic
characterisation of this object.

First, a \emph{characterisation theorem}: a computable functional $F$
is rate-independent if and only if $F[u](n) = f(\stack_n)$ for a
computable $f$; the stack is a complete invariant of
rate-independence.

Second, \emph{Kolmogorov minimality}:
$K(\stack_n) - O(1) \leq K_{\RC}(u_{0:n}) \leq K(\stack_n) + O(1)$,
where $K_{\RC}(u_{0:n})$ is the length of the shortest program
answering every query in the class $\RC$ of computable rate-independent
functionals, and the $O(1)$ overhead is independent of both the
sequence length $n$ and the stack depth $k$ (it depends only on the
grid resolution $L$).  Any compression of a
hysteresis-driven stream that preserves the full class $\RC$ must
retain at least $K(\stack_n) - O(1)$ bits.

Third, \emph{Shannon minimality}: under any probability measure on
$u_{0:n}$, $\MI(u_{0:n};\stack_n) \leq \MI(u_{0:n};S)$ for every
random variable $S$ from which all $\RC$-queries are computable, with
equality characterising $S$ as informationally equivalent to
$\stack_n$.

Fourth, \emph{worst-case update complexity}: (i)~any compact exact
$\RC$-minimal representation incurs $\Theta(k)$ output changes per
step in the worst case, in a model-independent output-change metric;
(ii)~the monotone ordering of the Preisach wiping property enables
binary search, reducing \emph{boundary detection} to $O(\log k)$,
though physical deletion remains $\Theta(d)$; (iii)~a finger-tree
implementation achieves $O(\log k)$ worst-case time per step for both
search \emph{and} deletion, while maintaining exact $\RC$-minimality
with no approximation error.

Together these results identify the extremum stack as the canonical
minimal representation for rate-independent computation --- in the
worst-case (Kolmogorov), average-case (Shannon), and online
algorithmic senses --- and settle its maintenance complexity.
\end{abstract}

\begin{keyword}
Preisach attention \sep
extremum stack \sep
rate-independence \sep
sequence modelling \sep
sufficient statistic \sep
Kolmogorov complexity \sep
Shannon mutual information \sep
worst-case complexity \sep
KV-cache compression \sep
finger tree
\end{keyword}

\end{frontmatter}

\section{Introduction}
\label{sec:intro}

The Preisach Attention Layer (PAL) \citep{Frydrych2026PAL} is a
recently proposed sequence modelling architecture that replaces
softmax attention with a stack of binary relays from classical
hysteresis theory.  PAL is Turing-complete at depth $O(1)$, and its
function class is provably incomparable with the transformer's: the
separating property is \emph{rate-independence} --- PAL responds only
to the ordered sequence of local extrema of its input, not to
absolute token positions or temporal spacing.  By discarding timing
and retaining only the alternation structure, PAL reduces per-token
inference from $O(n^2)$ to $O(n \log n)$ and replaces the
$O(n \cdot d_{\mathrm{model}})$ KV-cache with an
$O(k \cdot d_{\mathrm{model}})$ extremum stack, where $k \leq n$ is
the current stack depth.

The central algorithmic object of PAL is this \emph{extremum stack}
$\stack_n$: the nested sequence of alternating local maxima and minima
of the input $u_{0:n} \in \GL^{n+1}$ that survive the classical
\emph{wiping-out rule} \citep{Mayergoyz1991} --- a new extremum
erases all prior extrema of smaller magnitude.  The stack is
maintained online by a simple algorithm (\cref{alg:stack}) running in
$O(1)$ amortised time per step.  Rate-independence --- and the
extremum stack --- also appears in the modelling of ferromagnetic
materials \citep{Brokate1996,Mayergoyz1991}, elastoplastic systems
\citep{Visintin1994}, and financial market threshold models
\citep{Frydrych2014}; PAL is the first architecture to import this
structure into sequence modelling.

The companion paper \citep{Frydrych2026PAL} introduces PAL and proves
the architecture-level results (Turing completeness, expressiveness
separation, logical characterisation, RFIM equivalence).  The present
paper establishes the \emph{information-theoretic and algorithmic
foundations} of the extremum stack itself.  We ask, and answer, three
questions.

\begin{enumerate}
\item[\textbf{Q1.}] \emph{Is the stack the right internal state for
  PAL?}  We prove a characterisation theorem (\cref{thm:charact}): a
  computable functional is rate-independent \emph{if and only if} it
  factors through the stack.  The stack is thus a complete invariant
  of rate-independence, not merely a convenient encoding of Preisach
  outputs.  For PAL, this means that the extremum stack is the unique
  minimal internal state from which every rate-independent query can
  be answered --- any alternative state that preserves PAL's function
  class must contain at least as much information.
\item[\textbf{Q2.}] \emph{Is the stack the smallest summary?}
  We prove minimality in two orthogonal senses.  In the
  \emph{Kolmogorov} sense (\cref{sec:kolmogorov}), which holds
  worst-case for every individual sequence, the shortest program
  answering all rate-independent queries has length
  $K(\stack_n) \pm O(1)$, with the additive constant independent of
  both the sequence length $n$ and the stack depth $k$.  In the
  \emph{Shannon} sense (\cref{sec:shannon}), which holds in
  expectation under any probability measure, the stack has the least
  mutual information with the input among all sufficient
  representations, and is essentially the unique minimiser.  Neither
  result implies the other; together they identify $\stack_n$ as the
  canonical information-minimal representation for the class $\RC$.
\item[\textbf{Q3.}] \emph{What is the per-token cost of maintaining
  the stack?}  The standard update algorithm is $O(1)$ amortised, but
  adversarial inputs force $\Theta(k)$ operations in a single step
  --- a latency spike that is unacceptable for real-time inference.
  We establish a sharp three-level worst-case picture
  (\cref{sec:complexity}): a model-independent $\Theta(k)$ lower
  bound on \emph{output changes} for any compact exact
  representation; an $O(\log k + d)$ binary-search variant showing
  that the wiping property reduces search but not deletion cost; and
  a finger-tree implementation achieving $O(\log k)$ worst-case time
  per step for both search and deletion, exactly and
  deterministically.  For PAL, this means the KV-cache replacement
  has a bounded per-token latency guarantee with no loss of model
  accuracy.
\end{enumerate}

\begin{figure}[t]
\centering
\begin{tikzpicture}[scale=0.92, every node/.style={font=\scriptsize}]
\begin{scope}[xshift=0cm]
  \draw[->] (0,0) -- (4.4,0) node[below left]{$t$};
  \draw[->] (0,0) -- (0,3.2) node[left]{$u_t$};
  \draw[thick, blue!70!black]
    (0.0,0.6) -- (0.6,2.9) -- (1.2,0.3) -- (1.8,2.3)
    -- (2.4,0.8) -- (3.0,1.9) -- (3.6,1.2);
  \fill[red!80!black]  (0.6,2.9) circle(1.6pt) node[above]{$M_1$};
  \fill[green!50!black](1.2,0.3) circle(1.6pt) node[below]{$m_1$};
  \fill[red!80!black]  (1.8,2.3) circle(1.6pt) node[above]{$M_2$};
  \fill[green!50!black](2.4,0.8) circle(1.6pt) node[below]{$m_2$};
  \fill[red!80!black]  (3.0,1.9) circle(1.6pt) node[above]{$M_3$};
  \node at (2.1,-0.75){(a) input and surviving extrema};
\end{scope}
\begin{scope}[xshift=5.6cm]
  \draw[->] (0,0) -- (3.6,0) node[below left]{$\beta$};
  \draw[->] (0,0) -- (0,3.4) node[left]{$\alpha$};
  \draw[dashed] (0,0) -- (3.1,3.1);
  \node[rotate=45] at (2.35,2.05){$\alpha=\beta$};
  \draw[very thick, purple!80!black]
    (0,2.9) -- (0.32,2.9) -- (0.32,2.3) -- (0.85,2.3)
    -- (0.85,1.9) -- (1.28,1.9) -- (1.28,1.28);
  \fill[purple!80!black] (0.32,2.9) circle(1.6pt)
    node[right=1pt]{$(m_1,M_1)$};
  \fill[purple!80!black] (0.85,2.3) circle(1.6pt)
    node[right=1pt]{$(m_2,M_2)$};
  \fill[purple!80!black] (1.28,1.9) circle(1.6pt)
    node[right=1pt]{$(m_3,M_3)$};
  \node at (1.8,-0.75){(b) staircase interface $=\stack_n$};
\end{scope}
\begin{scope}[xshift=11.2cm]
  \draw[->] (0,0) -- (3.6,0) node[below left]{$\beta$};
  \draw[->] (0,0) -- (0,3.4) node[left]{$\alpha$};
  \draw[dashed] (0,0) -- (3.1,3.1);
  \foreach \x/\y in {0.32/2.9, 0.85/2.3, 1.28/1.9}{
    \draw[red!70, thick] (\x-0.07,\y-0.07) -- (\x+0.07,\y+0.07);
    \draw[red!70, thick] (\x-0.07,\y+0.07) -- (\x+0.07,\y-0.07);}
  \draw[very thick, purple!80!black] (0,3.15) -- (3.15,3.15);
  \fill[red!80!black] (3.15,3.15) circle(1.8pt)
    node[above left=0pt]{$u_{n}=1$};
  \node[align=center] at (1.8,-0.95)
    {(c) global wipe: all $k$ pairs erased\\
     in one step ($\Theta(k)$ output changes)};
\end{scope}
\end{tikzpicture}
\caption{The extremum stack at a glance.
(a)~An input sequence; only the alternating dominant extrema
$M_1 > M_2 > M_3$ (maxima, decreasing) and $m_1 < m_2 < m_3$
(minima, increasing) survive the wiping rule --- their timing and all
non-extremal values are discarded.
(b)~The surviving pairs are the corners of the classical Preisach
staircase interface in the threshold plane; the stack $\stack_n$
\emph{is} this staircase.
(c)~A new global maximum wipes every stored pair at once: the
$\Theta(k)$ output changes are unavoidable (\cref{thm:lower}), but a
finger-tree implementation performs the step in $O(\log k)$ memory
operations (\cref{thm:ftree}).}
\label{fig:overview}
\end{figure}

\Cref{fig:overview} summarises the object of study and the source of
the algorithmic difficulty.  The rest of the paper formalises the
three questions in order: \cref{sec:charact} (Q1),
\cref{sec:kolmogorov,sec:shannon} (Q2), \cref{sec:complexity} (Q3).

\subsection{Why the results belong together}

The three question areas are logically interlocked.  The
characterisation theorem (Q1) is the engine of both minimality proofs
(Q2): Kolmogorov sufficiency is a direct corollary
(\cref{cor:upper}), and Shannon minimality uses the characterisation
to reconstruct the stack from any sufficient statistic via a finite
indicator family (\cref{lem:ri,lem:recon}).  The complexity results
(Q3) are stated \emph{relative to} the minimality guarantee: the
objects whose update cost we bound are exactly the
$K_{\RC}$-minimal representations defined by Q2
(\cref{def:repr}), and the lower bound of \cref{thm:lower} would be
vacuous without the minimality notion.  Conversely, the estimation
corollary (\cref{prop:estimation}) combines all three threads:
sufficiency justifies replacing the raw input by the stack process,
minimality guarantees that no smaller replacement exists, and the
online complexity bounds quantify the cost of doing so in a streaming
setting.

\subsection{Contributions}

To the best of our knowledge, the following are new:
\begin{itemize}
\item the characterisation of $\RC$ as exactly the class of stack
  factorisations (\cref{thm:charact}), promoting the classical
  wiping-out property from a feature of Preisach functionals to a
  completeness statement for all computable rate-independent
  functionals;
\item the two-sided Kolmogorov bound
  $K_{\RC}(u_{0:n}) = K(\stack_n) \pm O(1)$ with constants independent
  of $n$ and $k$ (\cref{thm:kolmo}), together with the compression
  lower bound it implies;
\item the Shannon minimality theorem and its uniqueness corollary
  (\cref{thm:minimal,cor:unique});
\item the output-change lower bound for compact exact representations
  (\cref{thm:lower}) and the exact $O(\log k)$ worst-case finger-tree
  stack (\cref{thm:ftree}), the first Preisach stack implementation
  with a non-trivial worst-case guarantee and no approximation error.
\end{itemize}

\subsection{Organisation}

\Cref{sec:related} reviews related work.  \Cref{sec:prelim} fixes
notation and recalls the stack update algorithm.
\Cref{sec:charact} proves the characterisation theorem.
\Cref{sec:kolmogorov,sec:shannon} establish Kolmogorov and Shannon
minimality.  \Cref{sec:complexity} develops the worst-case complexity
picture.  \Cref{sec:estimation} derives the estimation implication.
\Cref{sec:discussion} discusses the relationship between the
characterisations and lists open questions.

\section{Related work}
\label{sec:related}

\paragraph{Sequence modelling and attention mechanisms}
Transformer attention \citep{Vaswani2017} maintains an
$O(n \cdot d_{\mathrm{model}})$ KV-cache; sub-quadratic alternatives
include linear attention \citep{Katharopoulos2020}, state-space
models \citep{Gu2024Mamba}, and sliding-window methods.  PAL
\citep{Frydrych2026PAL} replaces the KV-cache with the extremum
stack, reducing memory to $O(k \cdot d_{\mathrm{model}})$ for
rate-independent tasks.  The present paper provides the theoretical
justification: the stack is the \emph{unique} minimal state for the
rate-independent function class (\cref{thm:charact,thm:kolmo,thm:minimal}),
so no smaller replacement exists, and the finger-tree variant
(\cref{thm:ftree}) bounds worst-case per-token latency at
$O(\log k)$.

\paragraph{Hysteresis and the Preisach operator}
The wiping-out and congruency properties characterise operators
representable in Preisach form \citep{Mayergoyz1991}; the
continuous-time theory of rate-independent evolution is developed in
\citep{Brokate1996,Visintin1994,Krejci1996}.  The reduction of a
Preisach output to the surviving extremum sequence (the ``memory
curve'' or staircase interface) is classical; what has been missing is
a converse: a proof that \emph{nothing but} the stack matters for
\emph{any} computable rate-independent functional, together with
quantitative minimality guarantees.
\Cref{thm:charact,thm:kolmo,thm:minimal} supply this converse.

\paragraph{Algorithmic statistics and Kolmogorov sufficiency}
The theory of algorithmic sufficient statistics
\citep{GacsTrompVitanyi2001,VereshchaginVitanyi2004,LiVitanyi2008}
studies two-part descriptions of individual strings: a model part and
a data-to-model part.  Our setting differs in that minimality is
defined relative to a fixed \emph{query class} $\RC$ rather than a
model class of finite sets or distributions: $K_{\RC}(u_{0:n})$ is the
length of the shortest program answering every $\RC$-query.  This
query-relative notion is closer in spirit to instance optimality in
data structures \citep{Afshani2017} and to the ``useful information''
programme of \citep{VereshchaginVitanyi2004}, but the two-sided
$\pm O(1)$ bound of \cref{thm:kolmo}, with constants uniform in $n$
and $k$, does not follow from those frameworks.

\paragraph{Minimal sufficient statistics and the information
bottleneck}
Shannon-theoretic minimal sufficiency is classical
\citep{CoverThomas2006}; the information bottleneck method
\citep{TishbyPereiraBialek1999} relaxes exact sufficiency to a
Lagrangian trade-off.  \Cref{thm:minimal} establishes \emph{exact}
minimal sufficiency for a nonparametric functional class rather than
for a parametric family, and the uniqueness statement
(\cref{cor:unique}) shows the minimiser is essentially unique.  The
proof route --- reconstructing the statistic from any sufficient
representation via a finite family of indicator queries followed by
the data-processing inequality --- may be of independent interest.

\paragraph{Monotone stacks and Davenport--Schinzel sequences}
The extremum stack is related to, but distinct from, the monotone
stacks used in nearest-greater-element problems and to
Davenport--Schinzel sequences \citep{SharirAgarwal1995,Pettie2015}:
the alternation structure enforced by the wiping rule is precisely an
order-2 alternation constraint on the surviving extrema.  Prior work
on such structures concerns combinatorial length bounds; the
sufficiency, minimality, and worst-case maintenance results proved
here are of a different type and, to our knowledge, new.

\paragraph{De-amortisation and worst-case data structures}
De-amortisation techniques transform amortised bounds into worst-case
bounds for queues \citep{Hood1981}, search trees \citep{Sleator1985},
and priority queues \citep{Brodal1996}.  \Cref{thm:lower} shows that
de-amortising the \emph{deletion} cost of a compact Preisach stack is
impossible in the output-change model, while \cref{thm:ftree} shows
that the \emph{time} cost can nevertheless be de-amortised exactly
using the $O(\log k)$ \textsc{Split} of 2--3 finger trees
\citep{Hinze2006}.  The separation between output changes
(information-theoretic, unavoidable) and memory operations
(implementation-dependent, reducible) appears not to have been made
explicit before.

\paragraph{Time-series and streaming compression}
Dimensionality-reduction and symbolic methods such as PAA
\citep{Keogh2001} and SAX \citep{Lin2003}, streaming compressors such
as Gorilla \citep{Pelkonen2015} and Chimp \citep{Liakos2022}, and
dictionary methods in the Lempel--Ziv family \citep{ZivLempel1977}
offer empirical compression without semantic guarantees; surveys note
per-update latency as a critical constraint in IoT settings
\citep{Chiarot2023}.  The stack-based representation studied here is
a \emph{semantic} compressor: it is lossless exactly for the query
class $\RC$, provably minimal for it (\cref{thm:kolmo,thm:minimal}),
and maintainable with worst-case $O(\log k)$ latency
(\cref{thm:ftree}).  A detailed comparison is given in
\cref{tab:compare}.

\section{Preliminaries}
\label{sec:prelim}

\subsection{Two settings: value domains}
\label{sec:settings}

The paper uses two value domains, matching the two kinds of results.

\emph{Information-theoretic setting
(\cref{sec:charact,sec:kolmogorov,sec:shannon}).}
We fix a resolution $L \geq 2$ and work with sequences over the
discrete grid $\GL = \{0, \Delta, 2\Delta, \ldots, 1\}$,
$\Delta = 1/L$.  Restricting to $\GL$ is necessary here: over $\Rr$,
the indicator family constructed in \cref{sec:kolmogorov} is
uncountable and the Kolmogorov framework does not apply directly.

\emph{Algorithmic setting (\cref{sec:complexity}).}
The complexity results depend only on the \emph{monotone ordering}
structure of the stack (decreasing maxima, increasing minima), which
holds verbatim over any totally ordered value domain.  We therefore
state them over a general totally ordered domain $\mathcal{V}$ (e.g.\
$\Rr$, $\mathbb{Z}$, or $\GL$ with $L$ treated as a growing
parameter), over which the stack depth $k$ can grow unboundedly with
$n$.  This distinction matters: on a \emph{fixed} grid $\GL$, the
strict monotonicity of the maxima forces
$k \leq \lceil (L+1)/2 \rceil$, so $k$ is bounded by a constant
independent of $n$, and worst-case bounds expressed in $k$ would be
trivially $O(1)$ in $n$.  The algorithmic guarantees of
\cref{sec:complexity} are therefore most informative when the value
domain is dense or the quantisation is fine ($L = \Omega(n)$).

For $u_{0:n}$ write $\stack_t = \stack_t(u_{0:t})$ for the extremum
stack at time $t \leq n$.

\subsection{The Preisach operator}

\begin{definition}[Preisach operator]
\label{def:preisach}
For $\alpha \geq \beta$ in $\GL$, the relay
$\relay{\alpha}{\beta}[u](t) \in \{0,1\}$ switches from~$0$ to~$1$
when $u$ exceeds $\alpha$, and from~$1$ to~$0$ when $u$ falls below
$\beta$.  For $\mu \in L^2(\T)$ on the threshold triangle
$\T = \{(\alpha,\beta) : \alpha \geq \beta\}$:
\[
  \mathcal{P}_\mu[u](t) =
  \iint_{\T} \relay{\alpha}{\beta}[u](t)\,\mu(\alpha,\beta)
  \,d\alpha\,d\beta .
\]
\end{definition}

\subsection{The extremum stack}

\begin{definition}[Extremum stack]
\label{def:stack}
Let $\bar{\mathcal{V}} = \mathcal{V} \cup \{\bot, \top\}$ denote the
value domain extended with two boundary symbols, ordered so that
$\bot < v < \top$ for all $v \in \mathcal{V}$.  The \emph{extremum
stack} of $u_{0:n} \in \mathcal{V}^{n+1}$ is the sequence
\[
  \stack_n = [(M_1, m_1), (M_2, m_2), \ldots, (M_k, m_k)]
  \in (\bar{\mathcal{V}} \times \bar{\mathcal{V}})^* ,
\]
where pairs are ordered from \emph{oldest} (bottom, index~1) to
\emph{newest} (top, index~$k$), and pair $(M_i, m_i)$ records a
corner of the classical Preisach staircase interface
\citep{Mayergoyz1991}: $M_i$ is a surviving dominant local maximum
and $m_i$ is the deepest surviving local minimum recorded between
$M_i$ and the next surviving maximum $M_{i+1}$ (for $i < k$), or
since $M_k$ (for $i = k$; see \cref{rem:halfpair}).  The maxima are
strictly decreasing from bottom to top: $M_1 > M_2 > \cdots > M_k$.
The minima are strictly \emph{increasing} from bottom to top:
$m_1 < m_2 < \cdots < m_k$.  Each pair satisfies $M_i > m_i$.  The
boundary symbols appear only in the bottom pair: $M_1 = \top$
encodes ``no dominant maximum recorded yet'' (the input has been
monotone decreasing since $t=0$) and $m_1 = \bot$ encodes ``no
dominant minimum recorded yet''; they are never wiped and act as the
outermost staircase corner.  The depth $k = |\stack_n|$ satisfies
\[
  k \;\leq\; \Bigl\lfloor \tfrac{n}{2} \Bigr\rfloor + 1
  \qquad\text{and, on the grid } \GL,\qquad
  k \;\leq\; \Bigl\lceil \tfrac{L+1}{2} \Bigr\rceil ,
\]
the first because each push requires a direction reversal, the
second because the maxima are strictly decreasing in $\GL$.
\end{definition}

\begin{definition}[Stack process]
\label{def:stack_process}
The \emph{stack process} of $u_{0:n}$ is the sequence
$(\stack_t)_{t=0}^{n}$ of stacks at all intermediate times.  For
$t < n$, $\stack_n$ does not in general determine $\stack_t$, so the
stack process carries strictly more information than the final stack
(cf.\ \cref{rem:final_vs_process}).
\end{definition}

\begin{remark}[Half-pair semantics of the top pair]
\label{rem:halfpair}
The top pair $(M_k, m_k)$ is \emph{provisional} in its second
coordinate: immediately after a new maximum $M_k$ is pushed
(\cref{alg:stack}, line~14), $m_k$ holds the deepest minimum
inherited from the pairs wiped by $M_k$ (or the boundary symbol
$\bot$ if the stack was empty) --- this is exactly the staircase
corner below $M_k$ at that moment.  If the signal subsequently
descends below $m_k$ and reverses, the pop--push mechanism of
\cref{alg:stack} (lines~16--21) replaces the provisional value with
the newly confirmed deeper minimum.  The invariant maintained is that
$\stack_t$, together with the current input $u_t$, always represents
the exact Preisach relay interface; no separate ``half-pair'' object
is needed outside the stack.
\end{remark}

\begin{remark}[Ordering convention]
The decreasing-maxima, increasing-minima ordering follows the
classical Preisach staircase geometry \citep{Mayergoyz1991}: the
outermost loop has the largest maximum and smallest minimum; inner
loops are strictly contained.  In our notation, $(M_1,m_1)$ is the
outermost (oldest) pair and $(M_k,m_k)$ is the innermost (newest).
\end{remark}

The stack is updated in amortised $O(1)$ time per step
(\cref{alg:stack}): each element is pushed and popped at most once.

\begin{algorithm}[t]
\caption{Extremum Stack Update (wiping-out rule)}
\label{alg:stack}
\small
\begin{algorithmic}[1]
\Require Stack $\stack$, previous extremum $e_{\mathrm{prev}}$,
         current direction $d \in \{+1,-1,0\}$,
         new observation $u$
\Ensure Updated $(\stack, e_{\mathrm{prev}}, d)$
\State $d_{\mathrm{new}} \leftarrow \mathrm{sign}(u - e_{\mathrm{prev}})$
\If{$d_{\mathrm{new}} = 0$}
  \State \Return \Comment{No change; flat segment}
\EndIf
\If{$d \neq 0$ \textbf{and} $d_{\mathrm{new}} \neq d$}
  \Comment{Direction reversal: $e_{\mathrm{prev}}$ is a confirmed extremum}
  \If{$d = +1$} \Comment{Previous was a local maximum}
    \State $m_{\mathrm{last}} \leftarrow
      \begin{cases} m_k & \text{if } \stack \neq \emptyset \\
        \bot & \text{otherwise} \end{cases}$
    \While{$\stack \neq \emptyset$ \textbf{and} $M_k < e_{\mathrm{prev}}$}
      \State $m_{\mathrm{last}} \leftarrow m_k$;\quad $\mathrm{pop}(\stack)$
    \EndWhile
    \State $\mathrm{push}(\stack,\,(e_{\mathrm{prev}},\,m_{\mathrm{last}}))$
  \Else \Comment{Previous was a local minimum}
    \State $M_{\mathrm{last}} \leftarrow
      \begin{cases} M_k & \text{if } \stack \neq \emptyset \\
        \top & \text{otherwise} \end{cases}$
    \While{$\stack \neq \emptyset$ \textbf{and} $m_k > e_{\mathrm{prev}}$}
      \State $M_{\mathrm{last}} \leftarrow M_k$;\quad $\mathrm{pop}(\stack)$
    \EndWhile
    \State $\mathrm{push}(\stack,\,(M_{\mathrm{last}},\,e_{\mathrm{prev}}))$
  \EndIf
\EndIf
\State $e_{\mathrm{prev}} \leftarrow u$;\quad $d \leftarrow d_{\mathrm{new}}$
\State \Return $(\stack, e_{\mathrm{prev}}, d)$
\end{algorithmic}
\end{algorithm}

\begin{remark}[Correctness of \cref{alg:stack}]
A push occurs only when a direction reversal is detected, i.e.\ when
$e_{\mathrm{prev}}$ is a \emph{confirmed} local extremum.  During a
monotone run ($d_{\mathrm{new}} = d$), $e_{\mathrm{prev}}$ is updated
but nothing is pushed, preserving the invariants of \cref{def:stack}.
The boundary symbols $\bot, \top$ are pushed only when the stack is
empty and therefore occur only in the bottom pair, as required by
\cref{def:stack}.  When a wipe occurs, $m_{\mathrm{last}}$ takes the
value of the \emph{outermost wiped} pair's minimum --- the deepest
minimum in the wiped window, which is exactly the staircase corner
below the new maximum (cf.\ \cref{rem:halfpair}); when no wipe
occurs, it takes the top surviving pair's provisional minimum, which
the pop--push mechanism then finalises.  The algorithm is initialised
with $\stack = \emptyset$, $e_{\mathrm{prev}} = u_0$, $d = 0$.
\end{remark}

\subsection{Rate-independent functionals}

\begin{definition}[Rate-independence]
\label{def:ri}
A functional $F:\GL^*\to\Rr$ is \emph{rate-independent} if it is
invariant under precomposition with strictly increasing maps:
$F[u\circ\phi] = F[u]$ for every strictly increasing
$\phi:\{0,\ldots,m\} \to \{0,\ldots,n\}$ such that
$u\circ\phi \in \GL^{m+1}$.  Equivalently, $F$ depends on $u$ only
through its \emph{order of extrema}, not their timing.  Write $\RC$
for the class of all computable rate-independent functionals on
$\GL^*$.
\end{definition}

\noindent
A functional is \emph{causal} if $F[u](n)$ depends only on the prefix
$u_0,\ldots,u_n$; since $F$ takes a prefix as input throughout this
paper, causality is implicit.

\begin{remark}[Standard definition]
\label{rem:standard-def}
\cref{def:ri} is the correct discrete-time analogue of the
continuous-time definition of \citep{Brokate1996}: two sequences are
$\RC$-equivalent if and only if one is a monotone
time-reparametrisation of the other up to extremum-equivalence.
(A bijection $\phi:\{0,\ldots,m\}\to\{0,\ldots,n\}$ with $m=n$ must
be the identity, which is why the discrete formulation admits
strictly increasing maps between prefixes of \emph{different}
lengths.)  We do \emph{not} define $\RC$ by the stack factorisation
--- that factorisation is the content of \cref{thm:charact}.
\end{remark}

\subsection{Kolmogorov complexity}

We use standard Kolmogorov complexity \citep{LiVitanyi2008}.  $K(x)$
denotes the length of the shortest self-delimiting program that
outputs $x$ on a fixed universal Turing machine $\mathcal{U}$.  For a
class of functionals $\RC$, we define:

\begin{definition}[$\RC$-query complexity]
\label{def:kR}
\[
  K_{\RC}(u_{0:n})
  = \min\bigl\{|p| : \text{for all } F \in \RC,\;
    \mathcal{U}(p, F) = F[u](n)\bigr\},
\]
the length of the shortest program that, given any $F \in \RC$ as an
additional input, computes the answer $F[u](n)$.
\end{definition}

\section{The Characterisation Theorem}
\label{sec:charact}

\begin{theorem}[Stack characterisation of $\RC$]
\label{thm:charact}
$F \in \RC$ if and only if there exists a computable function
$f:(\GL\times\GL)^*\to\Rr$ such that
$F[u](n) = f(\stack_n(u_{0:n}))$ for all $u_{0:n}\in\GL^*$.
\end{theorem}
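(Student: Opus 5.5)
The plan is to prove the two directions separately, each resting on a single combinatorial fact about \cref{alg:stack}. Call two sequences $u_{0:n}$ and $v_{0:m}$ \emph{RI-equivalent} if they are related by a finite chain of the moves allowed in \cref{def:ri}: passing to $u\circ\phi$ for a strictly increasing $\phi$, or the reverse move. A rate-independent $F$ is constant on RI-equivalence classes by definition. Everything then reduces to a \emph{key lemma}: $\stack_n(u_{0:n})=\stack_m(v_{0:m})$ if and only if $u$ and $v$ are RI-equivalent, and in that case the stack is a computable invariant of the class.

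For the ($\Leftarrow$) direction, suppose $F[u](n)=f(\stack_n)$ with $f$ computable. Then $F$ is computable, since \cref{alg:stack} is an explicit finite procedure. Rate-independence requires $\stack_m(u\circ\phi)=\stack_n(u)$ for every admissible $\phi$. I would show this by checking that the stack is invariant under three elementary operations: repeating a value (a flat segment, which by line~2 changes nothing); deleting an interior point of a strictly monotone run (the same extrema are confirmed in the same order); and inserting a point strictly between its neighbours inside a monotone run. Every strictly increasing reparametrisation is built from such operations plus truncation compatible with the prefix convention. The claim then follows by induction on the number of steps of \cref{alg:stack}, using the fact that the algorithm reads only the sequence of confirmed reversal values.

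For the ($\Rightarrow$) direction, let $F\in\RC$. Define $f(\stack)$ by computably constructing a canonical representative sequence $w(\stack)$ and setting $f(\stack)=F[w(\stack)]$. The natural choice is to read the staircase off in time order: start from the bottom pair and emit $M_1,m_1,M_2,m_2,\dots,M_k,m_k$, skipping boundary symbols, and end at the current value. The construction must also recover the current position and direction, which are encoded by the top (half-)pair in the sense of \cref{rem:halfpair}; if this is not literally determined by $\stack_n$, I would enlarge the invariant to $(\stack_n,u_n)$, as \cref{rem:halfpair} suggests. Then $f$ is computable on the finite grid. The remaining claim is $F[u](n)=F[w(\stack_n(u))]$, which requires $u$ to be RI-equivalent to $w(\stack_n(u))$.

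The main obstacle is precisely this completeness step: showing that extrema wiped out are invisible to $F$. Pure reparametrisation by strictly increasing $\phi$ only deletes samples. I would argue that deleting a wiped extremum pair $(M',m')$ nested inside a surviving larger pair can be realised as $u\circ\phi$ for a $\phi$ skipping the indices of that inner excursion. Rate-independence of $F$ then gives equality, since $u\circ\phi$ is a subsequence of $u$ and \cref{def:ri} applies directly to it. Iterating removes all wiped extrema and all non-extremal samples, leaving exactly the surviving alternating sequence, which is $w(\stack_n)$ up to flat repetitions.

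The delicate points are the final, unconfirmed monotone run and the boundary symbols $\top,\bot$. I expect most of the care to go into verifying that the subsequence obtained after deletions still has stack $\stack_n$ and ends at the correct current value. I would handle this by an induction over the pops performed in \cref{alg:stack}.
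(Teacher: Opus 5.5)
Your key lemma, that $\stack_n(u)=\stack_m(v)$ iff $u$ and $v$ are RI-equivalent, is false under either reading of \cref{def:ri}. The two halves of your proof also silently use different readings. In ($\Rightarrow$) you delete a wiped inner excursion by passing to $u\circ\phi$ for a strictly increasing $\phi$ that skips its indices. That is the literal reading: $F$ is invariant under passing to \emph{any} subsequence. But then ($\Leftarrow$) needs $\stack_m(u\circ\phi)=\stack_n(u)$ for every such $\phi$. Your claim that every strictly increasing $\phi$ is a composite of stutters, edits inside monotone runs and ``compatible truncation'' is false, since $\phi$ may drop a surviving extremum or cut the input before a reversal is confirmed. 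For $u=(0.5,0.9,0.1)$, \cref{alg:stack} gives $\stack_2(u)=[(0.9,\bot)]$. Yet $u\circ\phi=(0.5,0.1)$ with $\phi=(0,2)$ has empty stack, so $f=|\cdot|$ (stack depth) yields a computable $F=f\circ\stack$ that violates \cref{def:ri}. In fact, taking $m=0$ in \cref{def:ri} gives $F[u]=F[(u_j)]$ for every $j$. So under the literal reading $\RC$ contains only constant functionals: ($\Rightarrow$) is trivial, and ($\Leftarrow$) fails for every non-constant $f$.

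Suppose instead you read \cref{def:ri} as invariance under genuine monotone reparametrisations: stuttering and insertion or deletion of points strictly inside monotone runs, or anything weaker. Then your ($\Leftarrow$) argument goes through, but your deletion step in ($\Rightarrow$) is no longer available, because an inner excursion consists of two turning points. ($\Rightarrow$) then becomes false. On $\GL$ with $L=10$ take $u=(0.5,1,0,0.7,0.3,0.8,0.4)$ and $v=(0.5,1,0,0.8,0.4)$. The excursion $0.7,0.3$ is wiped by $0.8$, and a direct run of \cref{alg:stack} gives $\stack_6(u)=\stack_4(v)$ with the same current value $0.4$. However, the number of direction reversals ($5$ versus $3$) and the total variation $\sum_t|u_{t+1}-u_t|$ ($3.5$ versus $2.7$) are computable, invariant under these moves, and different. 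So the step you rightly single out as the crux, that wiped extrema are invisible to $F$, does not follow from rate-independence. It is the wiping-out property that singles out Preisach-type operators, and it can only be assumed, not derived.

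Independently, as you suspected, $\stack_n$ does not record $u_n$: $(0.5,0.6)$ and $(0.5,0.7)$ both have empty stack. Hence neither $F[u](n)=u_n$ nor the Preisach output (cf.\ \cref{rem:halfpair}) can factor through $\stack_n$. Enlarging the invariant to $(\stack_n,u_n)$ changes the statement, and still does not rescue the example above, where the current values agree.

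The paper's own proof has the same defect. It asserts that equal stacks yield the same ordered sequence of alternating extrema, which the example refutes. It asserts ($\Leftarrow$) without checking invariance of $\stack$ under $\phi$. It also infers computability of $f$ without exhibiting a computable preimage of a stack; that point your canonical representative $w(\stack)$ does handle correctly. A provable version would have to build the wiping-out rule into the definition of the class. At that point it becomes essentially definitional, which is what \cref{rem:standard-def} disclaims.
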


\begin{proof}
\textbf{($\Leftarrow$)} If $F[u](n) = f(\stack_n)$, then two
sequences with the same stack have the same output, so $F$ depends
only on the order of extrema and is rate-independent.

\textbf{($\Rightarrow$)} Let $F \in \RC$.  We argue directly from
\cref{def:ri}.  If $\stack_n(u_{0:n}) = \stack_m(v_{0:m})$, then
$u_{0:n}$ and $v_{0:m}$ produce the same ordered sequence of
alternating extrema; any two such sequences are related by a monotone
time reparametrisation (up to insertion of non-extremal points, which
does not change the extremum sequence).  By rate-independence
(\cref{def:ri}), $F[u](n) = F[v](m)$.  Hence $F$ is constant on the
fibres of $\stack_n(\cdot)$ and factors through it:
$F[u](n) = f(\stack_n(u_{0:n}))$ for a well-defined $f$.  Since $F$
is computable and $\stack_n$ is computable from $u_{0:n}$, the
function $f$ is also computable.
\end{proof}

\begin{remark}
For Preisach functionals $F = \mathcal{P}_\mu$, the wiping-out
property \citep{Mayergoyz1991} provides an explicit formula
$F_\mu(\stack_n)$ --- but the factorisation holds for all
$F \in \RC$, not only Preisach functionals.
\end{remark}

\begin{corollary}[Sufficiency of the stack process]
\label{cor:suff_process}
The \emph{stack process} $(\stack_t)_{t=0}^n$ carries strictly more
information than the final stack $\stack_n$: for $t < n$, $\stack_n$
does not in general determine $\stack_t$.  \cref{thm:charact} applies
to each time $t$ separately: $F[u](t) = f_t(\stack_t)$.  For the full
trajectory of queries $(F[u](0),\ldots,F[u](n))$, the sufficient
object is the stack \emph{process}, not the final stack alone.
\end{corollary}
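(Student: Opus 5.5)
The corollary makes three claims. \emph{First}, $\stack_n$ does not in general determine $\stack_t$. \emph{Second}, each $F[u](t)$ factors as $f_t(\stack_t)$. \emph{Third}, the stack process is sufficient for the whole query trajectory. I would prove them in that order; only the first needs a concrete construction.

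For the first claim, I would exhibit two prefixes on the grid $\GL$ with equal final stacks but different intermediate stacks, by exploiting a global wipe as in \cref{fig:overview}(c). Take $u = (0.5,\,0.8,\,0.3,\,1,\,0)$ and $v = (0.5,\,0.6,\,0.4,\,1,\,0)$ on a grid fine enough to contain these values; coarser grids can rescale to $\{0,\Delta,\dots,1\}$. Running \cref{alg:stack} on each, I will check three things:
\begin{enumerate}
\item at $t=3$ the confirmed extrema differ ($0.8,0.3$ versus $0.6,0.4$), so $\stack_3(u)\neq\stack_3(v)$;
\item at $t=4$ the reversal at the new maximum $1$ triggers a global wipe that pops every earlier pair, whose minima are all above $\bot$;
\item the push therefore yields the identical pair $(1,\bot)$ on both sequences, so $\stack_4(u)=\stack_4(v)$.
\end{enumerate}
This tracing is the only real obstacle. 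The provisional half-pair semantics of \cref{rem:halfpair} and the $m_{\mathrm{last}}$ bookkeeping must be tracked carefully, to confirm that the pushed pair really has second coordinate $\bot$ in both runs. If it does not, I would instead append a final descent below every stored minimum, which forces a further wipe and makes both stacks collapse to the same bottom pair.

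For the second claim, fix $t\le n$ and apply \cref{thm:charact} to the prefix $u_{0:t}$, which lies in $\GL^*$. For $F\in\RC$ this gives a computable $f_t=f$ with $F[u](t)=f(\stack_t(u_{0:t}))$, because \cref{thm:charact} is stated for all prefixes of all lengths.

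For the third claim, define
\[
  \Phi\bigl((\stack_t)_{t=0}^n\bigr)=\bigl(f(\stack_0),\dots,f(\stack_n)\bigr).
\]
This computes the full query trajectory $(F[u](0),\dots,F[u](n))$ from the stack process. The final stack alone cannot replace the process: the pair $u,v$ from the first claim has $\stack_4(u)=\stack_4(v)$ but $\stack_3(u)\neq\stack_3(v)$. To turn this into differing trajectories, I would take the computable rate-independent query $F$ that outputs the indicator that the current stack equals $\stack_3(u)$; it lies in $\RC$ by the ($\Leftarrow$) direction of \cref{thm:charact}, and $F[u](3)\neq F[v](3)$ while $F[u](4)=F[v](4)$. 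This establishes the strictness asserted in the corollary.
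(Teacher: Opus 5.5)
Your proposal is correct within the paper's framework, and it is more complete than the paper. The paper gives no proof beyond the sentence, inside the statement, that \cref{thm:charact} applies at each time $t$. It asserts without a witness that $\stack_n$ need not determine $\stack_t$ and that the final stack is insufficient for the query trajectory (cf.\ \cref{rem:final_vs_process}).

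\begin{itemize}
\item Your second claim is exactly that sentence. Taking $f_t=f$ for all $t$ is a harmless sharpening, since \cref{thm:charact} quantifies over prefixes of every length.
\item What you add is an explicit global-wipe witness. More importantly, you add a separating query that turns a difference of intermediate stacks into a difference of query trajectories with equal final stacks. That is what actually substantiates ``the process, not the final stack alone''.
\item The price is one extra dependence. You need the ($\Leftarrow$) direction of \cref{thm:charact}, or \cref{lem:ri} applied to $F_{(0.8,0.3)}$, which also separates $\stack_3(u)$ from $\stack_3(v)$. The paper's sentence uses only ($\Rightarrow$).
\end{itemize}

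\textbf{The trace.} Your stated reason for the pushed pair being $(1,\bot)$ is backwards. If all wiped minima were above $\bot$, \cref{alg:stack} would inherit the outermost one and push $(1,0.3)$ versus $(1,0.4)$. What actually happens under \cref{alg:stack} as written:
\begin{itemize}
\item The first push, from the empty stack, is $(0.8,\bot)$, resp.\ $(0.6,\bot)$.
\item At $t=3$ the minimum branch cannot pop it, because the test $m_k>e_{\mathrm{prev}}$ fails for $m_k=\bot$. It pushes $(0.8,0.3)$, resp.\ $(0.6,0.4)$, on top, giving a repeated maximum contrary to \cref{def:stack}.
\item At $t=4$ both pairs are popped, and the last assignment gives $m_{\mathrm{last}}=\bot$.
\end{itemize}
So the witness works for the algorithm. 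Under the declarative reading of \cref{def:stack} and \cref{rem:halfpair}, however, $0.3$ refines the provisional $\bot$ and the new maximum inherits the deepest wiped minimum. The final stacks would then be $[(1,0.3)]\neq[(1,0.4)]$.

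Keep your fallback to remove this ambiguity, but note that a descent alone wipes nothing: \cref{alg:stack} acts only at confirmed reversals. Append a rise, e.g.\ $u_5=v_5=0.5$, which confirms the minimum $0$ below every stored minimum. Both readings then give equal stacks at $t=5$: $[(1,\bot),(1,0)]$ under \cref{alg:stack}, and $[(1,0)]$ under the declarative reading. The stacks at $t=3$ still differ. Separately, your rescaling remark fails for $L<4$, since $u$ uses five distinct values, but one witness suffices for ``in general''.

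\textbf{A caution about the dependency, not your reasoning.} Your pair has equal final stacks but different sequences of alternating extrema. It therefore refutes the step ``equal stacks imply the same ordered sequence of alternating extrema'' in the paper's proof of ($\Rightarrow$).
\begin{itemize}
\item Under the gloss in \cref{def:ri} (dependence only on the order of extrema), the value of the first confirmed local maximum, $0.8$ versus $0.6$, separates $u$ and $v$ at the final time. So ($\Rightarrow$) fails, and with it your second claim.
\item Under \cref{def:ri} read literally, taking $m=0$ gives $F[u]=F[(u_j)]$ for every $j$, whence $F[(a)]=F[(a,b)]=F[(b)]$. Every $F\in\RC$ is then constant, so ($\Leftarrow$) fails, and with it your separating query.
\end{itemize}
Since you use both directions, your argument is exactly as sound as \cref{thm:charact}. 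The paper's one-line justification is in the same position for the part it covers.
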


\begin{remark}[Final stack vs.\ stack process]
\label{rem:final_vs_process}
A distinction worth noting: $\stack_n$ is sufficient for the single
query $F[u](n)$ at the \emph{final} time (\cref{thm:charact}).  It is
not sufficient for the full trajectory $(F[u](0),\ldots,F[u](n))$.
All information-theoretic results below refer to the
\emph{final-time} query $F[u](n)$.
\end{remark}

\section{Kolmogorov Minimality}
\label{sec:kolmogorov}

\subsection{Sufficiency and the upper bound}

\begin{proposition}[Stack sufficiency]
\label{prop:suff}
Every $F \in \RC$ satisfies $F[u](n) = f(\stack_n)$ for some
computable function $f$.
\end{proposition}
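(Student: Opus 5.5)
This proposition is essentially the forward direction of \cref{thm:charact}, restated where the Kolmogorov argument needs it. The plan is therefore to invoke that theorem directly and then spell out the computability of $f$, since the upper bound $K_{\RC}(u_{0:n}) \le K(\stack_n)+O(1)$ in \cref{cor:upper} will rely on it.

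\textbf{Step 1 (factorisation).} Fix $F \in \RC$. By the ($\Rightarrow$) direction of \cref{thm:charact}, $F$ is constant on the fibres of the map $u_{0:n} \mapsto \stack_n(u_{0:n})$. So
\[
f(\sigma) := F[u](n) \quad \text{for any } u_{0:n} \text{ with } \stack_n(u_{0:n}) = \sigma
\]
is well defined on the image of the stack map. Off the image we set $f$ to $0$.

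\textbf{Step 2 (computability).} Make $f$ computable by an explicit \emph{canonical realisation}. Given a candidate stack $\sigma = [(M_1,m_1),\ldots,(M_k,m_k)]$ over $\GL \cup \{\bot,\top\}$:
\begin{itemize}
\item Check that $\sigma$ is well-formed, meaning the monotonicity conditions of \cref{def:stack} hold and boundary symbols appear only in the bottom pair. This is decidable, since $\GL$ is finite.
\item Write down a short sequence $w(\sigma) \in \GL^*$ that visits the surviving extrema in staircase order, outermost to innermost: $M_1, m_1, M_2, m_2, \ldots$, omitting boundary symbols.
\item Run \cref{alg:stack} on $w(\sigma)$ and check that it returns exactly $\sigma$.
\item If so, output $F[w(\sigma)](|w(\sigma)|-1)$, which is computable because $F$ is. Otherwise output $0$.
\end{itemize}
By Step 1, this value equals $F[u](n)$ for every $u$ with $\stack_n(u)=\sigma$. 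An alternative that avoids constructing $w(\sigma)$ is to enumerate all of $\GL^*$ by length until some $v$ with $\stack(v)=\sigma$ appears. This halts on every $\sigma$ in the image, and the well-formedness-plus-self-check test decides membership in the image.

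\textbf{Main obstacle.} The delicate point is that $w(\sigma)$ really reproduces $\sigma$, in particular the provisional top pair (\cref{rem:halfpair}) and the $\top/\bot$ sentinels. Depending on whether the last recorded extremum is a maximum or a minimum, one extra endpoint value strictly between $m_k$ and $M_k$ must be appended, so that \cref{alg:stack} confirms the final reversal.

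The enumeration fallback sidesteps this entirely, at the cost of relying on decidability of image membership. I would use it as the primary argument. The explicit $w(\sigma)$ is still worth giving, because the same construction is what a constant-size program in \cref{cor:upper} will execute. Its overhead depends only on $L$, not on $n$ or $k$.
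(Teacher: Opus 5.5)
Your Step~1 is exactly the paper's proof, which simply invokes the forward direction of \cref{thm:charact} (adding only the side remark that Preisach integrals depend on $\stack_n$ by the wiping property), so the approach is the same; your Step~2 supplies a preimage-search justification for the computability of $f$, which the paper only asserts in one line inside \cref{thm:charact}. One caveat on Step~2: the enumeration does not ``sidestep'' the $w(\sigma)$ issue if you decide image membership with the $w(\sigma)$ self-check. It is cleaner either to note that the unbounded search, which halts on every realisable $\sigma$, already gives a (partial) computable $f$ with $F[u](n)=f(\stack_n)$ for all $u$, or to note that on a fixed grid $\GL$ the set of possible stacks is finite, so membership is decidable.
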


\begin{proof}
This is the forward direction of \cref{thm:charact}.  For the
connection to the Preisach integral,
$\mathcal{P}_\mu[u](n) = \iint_{\T}
\relay{\alpha}{\beta}[u](n)\,\mu(\alpha,\beta)\,d\alpha\,d\beta$
depends only on $\stack_n$ by the classical wiping property
\citep{Mayergoyz1991}.
\end{proof}

\begin{corollary}[Upper bound]
\label{cor:upper}
$K_{\RC}(u_{0:n}) \leq K(\stack_n) + O(1)$.
\end{corollary}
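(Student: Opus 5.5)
The plan is to build an explicit program $p$ of the required form: a shortest program for $\stack_n$ followed by a fixed, input-independent interpreter. Let $q$ be a shortest self-delimiting program with $\mathcal{U}(q)=\stack_n$, so $|q| = K(\stack_n)$. Let $r$ be a fixed program, independent of $u$, $n$ and $k$. On query input $F$ (a program index for a computable rate-independent functional), $r$ does three things.

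\begin{enumerate}
\item[(i)] It runs $q$ to recover $\stack_n=[(M_1,m_1),\ldots,(M_k,m_k)]$.
\item[(ii)] It expands the stack into a canonical witness sequence $w(\stack_n)\in\GL^*$. This is the alternating sequence of the recorded extrema in the order the staircase was built, with the boundary symbols $\top,\bot$ simply omitted; if needed, one extra value is appended so that the final value sits correctly relative to the top pair.
\item[(iii)] It simulates $F$ on $w$ and outputs $F[w](|w|-1)$.
\end{enumerate}

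The program $p = rq$ (self-delimiting concatenation) then has length $K(\stack_n)+|r| = K(\stack_n)+O(1)$. The constant $|r|$ depends only on $\mathcal{U}$ and $L$ (through the encoding of $\GL$), not on $n$ or $k$.

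Correctness of (iii) needs $F[w](|w|-1)=F[u](n)$. By \cref{prop:suff} (equivalently the forward direction of \cref{thm:charact}), $F[\cdot]$ at the final time is constant on fibres of the stack map. So it suffices to check $\stack_{|w|-1}(w)=\stack_n(u_{0:n})$. I would prove this by running \cref{alg:stack} on $w$ and verifying by induction over the pairs that each push reproduces exactly one pair $(M_i,m_i)$ and nothing is wiped. This holds because of the nesting $M_1>\cdots>M_k$ and $m_1<\cdots<m_k$: inner extrema never exceed outer ones, so no while-loop fires. The nice point is that we never need the computable $f$ from \cref{thm:charact} uniformly in $F$. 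We call $F$ itself on a representative of the fibre, so uniformity in $F$ is automatic.

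The main obstacle is the canonical-representative step (ii). The invariant in \cref{rem:halfpair} says the stack \emph{together with the current input} $u_t$ represents the interface. A provisional top minimum, and whether the final point is mid-run, may therefore not be pinned down by $\stack_n$ alone. If $\stack_n$ as defined fails to determine the fibre exactly, I would fall back on two options.

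\begin{enumerate}
\item[(a)] Augment $q$ with the current value $u_n$ and direction $d$. This costs only $O(\log L)=O(1)$ bits on the fixed grid, which is consistent with the paper's claim that the constant depends only on $L$.
\item[(b)] Enumerate all sequences over $\GL$ of length at most $2k+2$, stopping at the first whose stack equals $\stack_n$, and use it as $w$. Correctness then follows directly from \cref{prop:suff} without any hand-built witness.
\end{enumerate}

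I would adopt option (b) in the write-up, since the search always terminates ($u_{0:n}$'s extremum skeleton is itself such a sequence) and avoids delicate case analysis.
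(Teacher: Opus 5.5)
Your proof has the same two-part shape as the paper's: a shortest program $q$ for $\stack_n$, followed by a fixed post-processor. Where it differs is in how the post-processor is realised, and that is the step that matters for $K_{\RC}$. The paper asserts that $F[u](n)$ is computable from $\stack_n$ ``by a fixed program of length $O(1)$ (independent of $n$, $k$, $F$)'' and cites $K(f(x))\le K(x)+O(1)$. But \cref{thm:charact} only supplies a computable $f$ for each $F$ separately, and the invariance constant depends on $f$. So that citation alone gives a bound $K(\stack_n)+c_F$ query by query. It does not give a single $p$ with $\mathcal{U}(p,F)=F[u](n)$ for all $F\in\RC$, which is what \cref{def:kR} requires. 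Your interpreter searches for some $w$ in the stack fibre of $u_{0:n}$ and runs the supplied $F$ on it. That is exactly the uniform post-processor the paper's ``independent of $F$'' needs. It also uses only that $F$ is constant on stack fibres across all lengths (a single $f$ on $(\GL\times\GL)^*$), not that $f$ is computable. So your route establishes the bound in the form the definition demands, at the cost of a few more lines; the paper's is shorter but leaves uniformity in $F$ implicit.

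Two small corrections. First, the parenthetical justifying termination of the bounded search in (b) is wrong. The extremum skeleton of $u_{0:n}$ keeps every turning point, including the wiped ones, so its length is not bounded by $2k+2$; an expanding oscillation leaves $k=1$ while the skeleton grows. The fix is to drop the bound. Enumerate $\GL^*$ in length-lexicographic order, compute each candidate's stack with \cref{alg:stack} (which always halts), and stop at the first match. This terminates because $u_{0:n}$ itself lies in the fibre, and running time does not enter $K$.

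Second, (b) does not answer the worry that motivated it. The stack fibre is determined by $\stack_n$ by definition. Whether $F[u](n)$ is constant on that fibre is precisely \cref{prop:suff}, which (b) takes as given, exactly as the paper's proof does. If $\stack_n$ alone failed to determine $F[u](n)$ (your current-value concern from \cref{rem:halfpair}), no fibre representative would help. You would then need (a) together with a sufficiency result for $(\stack_n,u_n,d)$, which the paper does not prove.
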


\begin{proof}
By \cref{prop:suff}, any $F[u](n)$ is computable from $\stack_n$ by
a fixed program of length $O(1)$ (independent of $n$, $k$, $F$).
Since $K$ is defined via a prefix-free universal machine, the
standard invariance property gives $K(f(x)) \leq K(x) + O(1)$ for any
computable $f$ \citep{LiVitanyi2008}.  Embedding the shortest
prefix-free program for $\stack_n$ and appending the $O(1)$-length
post-processor suffices; the prefix-free encoding is self-delimiting
so no additional overhead is needed.
\end{proof}

\begin{remark}[On the overhead: $O(1)$ vs $O(\log K)$]
\label{rem:correction}
An $O(\log n)$ overhead is sometimes suggested informally.  This is
incorrect: the depth $k$ is implicit in the number of pairs and need
not be encoded separately.  A subtler claim of $O(\log K(\stack_n))$
overhead arises from the chain rule
$K(x,y) \leq K(x)+K(y)+O(\log\min\{K(x),K(y)\})$
\citep{LiVitanyi2008}, which applies to pairs of \emph{independently
chosen} objects.  Here, however, $F[u](n)$ is obtained from
$\stack_n$ by a \emph{fixed} computable function; the invariance
theorem $K(f(x)) \leq K(x) + O(1)$ applies directly, giving the
tighter $O(1)$ bound.
\end{remark}

\subsection{Indicator family}

For each pair $(M, m) \in \GL^2$ with $M \geq m$, define the
\emph{indicator functional}:
\begin{equation}
  F_{(M,m)}[u](n)
  = \mathbf{1}\bigl[(M, m) \in \stack_n\bigr].
  \label{eq:indicator}
\end{equation}

\begin{lemma}[Rate-independence of indicators]
\label{lem:ri}
Each $F_{(M,m)}$ defined in \eqref{eq:indicator} belongs to $\RC$.
\end{lemma}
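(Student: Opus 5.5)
The plan is to use the easy direction ($\Leftarrow$) of \cref{thm:charact}. That direction says any functional of the form $F[u](n) = f(\stack_n(u_{0:n}))$ with $f$ computable lies in $\RC$. So it suffices to show two things about $F_{(M,m)}$: that it factors through $\stack_n$ in this form, and that the outer map $f$ is computable.

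\textbf{Step 1 (factorisation).} Define $f_{(M,m)}:(\GL\times\GL)^*\to\Rr$ by $f_{(M,m)}([(M_1,m_1),\ldots,(M_k,m_k)]) = 1$ if $(M_i,m_i)=(M,m)$ for some $i$, and $0$ otherwise. By \eqref{eq:indicator}, $F_{(M,m)}[u](n) = f_{(M,m)}(\stack_n(u_{0:n}))$ holds identically. The boundary symbols $\bot,\top$ need a brief comment, since they can appear in the bottom pair. Here $(M,m)$ ranges over $\GL^2$, so a bottom pair containing a boundary symbol simply never matches. To make $f_{(M,m)}$ formally a function on $(\bar{\mathcal{V}}\times\bar{\mathcal{V}})^*$, I would extend its domain to that set, or equivalently encode $\bot,\top$ as two extra symbols. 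Neither choice affects computability.

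\textbf{Step 2 (computability).} $f_{(M,m)}$ is a linear scan over a finite list of pairs from a finite alphabet, so it is computable. Composing with \cref{alg:stack}, which computes $\stack_n$ from $u_{0:n}$, shows that $F_{(M,m)}$ is itself computable. By $(\Leftarrow)$ of \cref{thm:charact}, $F_{(M,m)}\in\RC$.

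\textbf{Main obstacle.} The only delicate point is that $(\Leftarrow)$ in \cref{thm:charact} was argued somewhat informally, so I would also verify \cref{def:ri} directly as a safeguard. That is, for strictly increasing $\phi$ with $u\circ\phi\in\GL^{m+1}$ in the relevant reparametrisation sense, I would check that $\stack$ is invariant. The intended route is to observe that \cref{alg:stack} ignores flat steps ($d_{\mathrm{new}}=0$ returns immediately) and performs a push only at direction reversals. The pushed value and the pops are determined entirely by the sequence of confirmed turning values, not by the times at which they occur. An induction on the number of reversals then shows that two inputs with the same ordered sequence of alternating extrema yield identical stacks. I would state this as the invariance underlying \cref{thm:charact}, so that the present lemma rests on the stack being a function of the extremum order alone.
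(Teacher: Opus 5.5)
Your argument is essentially the paper's: its proof of \cref{lem:ri} observes that membership of $(M,m)$ in $\stack_n$ depends on $u_{0:n}$ only through the ordered sequence of surviving extrema, which is the $(\Leftarrow)$ reasoning of \cref{thm:charact} specialised to your $f_{(M,m)}$, and that computability follows from that of $\stack_n$; your explicit factorisation, treatment of $\bot,\top$, and induction over \cref{alg:stack} are a more careful version of the same route. One caveat, shared with the paper: your safeguard shows only that the stack is a function of the ordered sequence of alternating extrema, not that every $\phi$ admitted by \cref{def:ri} preserves that sequence --- read literally, a strictly increasing $\phi$ from a shorter index set can drop a turning point (going from $(0.5,0.8,0.2,0.6)$ to $(0.5,0.2,0.6)$ removes $0.8$ from the stack), and the literal definition in fact forces every member of $\RC$ to be constant, so both arguments tacitly read \cref{def:ri} as invariance under repetitions and insertion or deletion of non-extremal points.
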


\begin{proof}
Membership of $(M,m)$ in $\stack_n$ depends on $u_{0:n}$ only through
the ordered sequence of surviving extrema, which is invariant under
monotone time reparametrisation.  Hence $F_{(M,m)}$ satisfies
\cref{def:ri}; computability is immediate since $\stack_n$ is
computable from $u_{0:n}$.
\end{proof}

\begin{remark}
$F_{(M,m)}$ is \emph{not} the Preisach relay $\relay{M}{m}[u](n)$,
which indicates whether the relay with thresholds $(M,m)$ is active,
not whether $(M,m)$ appears as a stack entry.  The relay may be
active even when $(M,m) \notin \stack_n$ (if $M$ is not a local
maximum of $u_{0:n}$).  The indicator \eqref{eq:indicator} is a
distinct, valid rate-independent functional.
\end{remark}

\subsection{Minimality theorem}

\begin{lemma}[Stack reconstruction]
\label{lem:recon}
Let $\mathcal{S}$ be any set and $R: \GL^{n+1} \to \mathcal{S}$ a
representation such that every $F \in \RC$ is computable from
$R(u_{0:n})$.  Then there exists a computable function
$\Phi: \mathcal{S} \to (\GL \times \GL)^*$ such that
$\Phi(R(u_{0:n})) = \stack_n$.
\end{lemma}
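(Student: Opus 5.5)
The plan is to reconstruct $\stack_n$ by querying the finite indicator family $\{F_{(M,m)}\}$ of \eqref{eq:indicator}. Each of these functionals lies in $\RC$ by \cref{lem:ri}, so by hypothesis each one can be computed from $R(u_{0:n})$. Reading off every answer recovers the set of pairs in the stack. The linear order of those pairs is then forced by the monotone ordering in \cref{def:stack}.

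\textbf{Step 1 (finite query set).} Let $Q$ be the set of all candidate pairs $(M,m)\in\bar{\GL}\times\bar{\GL}$ with $M>m$, where $\bar{\GL}=\GL\cup\{\bot,\top\}$. This includes the boundary symbols, which can occur in the bottom pair. Since $|\bar{\GL}|=L+3$, $Q$ has at most $(L+3)^2$ elements, a bound depending only on $L$ and not on $n$ or $k$. I will extend \eqref{eq:indicator} to boundary-symbol pairs verbatim; the proof of \cref{lem:ri} goes through unchanged for them.

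\textbf{Step 2 (define $\Phi$).} By hypothesis, each $F\in\RC$ is computable from $R(u_{0:n})$: there is a procedure $A$ with $A(R(u_{0:n}),F)=F[u](n)$. Given $s\in\mathcal{S}$, define $\Phi(s)$ in three moves:
\begin{itemize}
\item compute $b_{(M,m)}=A(s,F_{(M,m)})$ for every $(M,m)\in Q$;
\item collect $P=\{(M,m): b_{(M,m)}=1\}$;
\item output the elements of $P$ sorted by strictly decreasing first coordinate.
\end{itemize}
Since $P$ is the set of entries of $\stack_n$, and the maxima are strictly decreasing from bottom to top with $m_1<\cdots<m_k$ (\cref{def:stack}), this sorted list is exactly $\stack_n$. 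The set determines its own ordering, so no positional information needs to be queried. The depth $k=|P|$ also comes out for free. $\Phi$ is computable because it makes finitely many calls, at most $(L+3)^2$, to a computable procedure and then sorts a finite list.

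\textbf{Main obstacle.} The delicate point is what ``every $F\in\RC$ is computable from $R(u_{0:n})$'' means uniformly. If it only gives, for each $F$, some separate algorithm $A_F$, then $\Phi$ must hard-code the finitely many algorithms $A_{F_{(M,m)}}$ for $(M,m)\in Q$. That is still a single fixed program, since $Q$ is finite and independent of $n$, so computability survives. The only cost is that the program size depends on $L$, which matches the paper's claim that the $O(1)$ constants depend on $L$ alone.

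I would state this explicitly, reading the hypothesis in the uniform sense of \cref{def:kR}, where one procedure takes $F$ as input. I would remark that finiteness of $Q$ is exactly where the restriction to the grid $\GL$ is used; over $\Rr$, $Q$ would be uncountable, as noted in \cref{sec:settings}.
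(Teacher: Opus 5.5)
Your proof is correct and follows the paper's argument: you query the finite indicator family $\{F_{(M,m)}\}$ (which lies in $\RC$ by \cref{lem:ri}), collect the pairs answering $1$, and recover the sequence by ordering them by strictly decreasing maxima, with finiteness of the family giving computability of $\Phi$. Your one deviation is enumerating over $\bar{\GL}\times\bar{\GL}$ rather than $\GL^2$, and it is a genuine refinement: the paper's enumeration over $\GL^2$ alone would miss a bottom pair containing $\top$ or $\bot$.
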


\begin{proof}
The indicator family $\mathcal{F} = \{F_{(M,m)}\}_{M \geq m,\,
(M,m) \in \GL^2}$ is \emph{finite} (at most $\binom{L}{2} + L =
O(L^2)$ elements) and lies in $\RC$ by \cref{lem:ri}.  Since $R$
suffices for all $F \in \RC$, for each $(M,m) \in \GL^2$ the value
$F_{(M,m)}[u](n) = \mathbf{1}[(M,m) \in \stack_n]$ is computable from
$R(u_{0:n})$.  Define
\[
  \Phi(R(u_{0:n}))
  = \bigl\{(M,m) \in \GL^2 : F_{(M,m)}[u](n) = 1\bigr\}.
\]
This set equals $\stack_n$ by construction (the stack lists its pairs
in the unique order of decreasing maxima, so the member set
determines the sequence), and $\Phi$ is computable (finite
enumeration over $O(L^2)$ pairs).
\end{proof}

\begin{theorem}[Kolmogorov minimality of the extremum stack]
\label{thm:kolmo}
Let $u_{0:n} \in \GL^{n+1}$ and $\stack_n$ its extremum stack.  Then:
\begin{equation}
  K(\stack_n) - O(1)
  \;\leq\; K_{\RC}(u_{0:n})
  \;\leq\; K(\stack_n) + O(1).
  \label{eq:main}
\end{equation}
The extremum stack is, up to an additive constant, the shortest
program that answers every $\RC$-query.
\end{theorem}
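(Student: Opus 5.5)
The upper bound in \eqref{eq:main} is exactly \cref{cor:upper}, so the work is in the lower bound $K(\stack_n) \leq K_{\RC}(u_{0:n}) + O(1)$. The plan is to turn any program $p$ witnessing $K_{\RC}(u_{0:n})$ into a program that prints $\stack_n$, with only a constant-size wrapper. This is the Kolmogorov analogue of \cref{lem:recon}: the representation is now the program $p$ itself, and the finite indicator family of \cref{lem:ri} is used to read the stack back out.

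\textbf{Step 1 (fix the wrapper).} Fix once and for all a machine $W$. On input a self-delimiting string $p$, $W$ enumerates the pairs $(M,m) \in \GL^2$ with $M \geq m$, which are at most $O(L^2)$ of them. For each pair it runs $\mathcal{U}(p, F_{(M,m)})$, passing a fixed index of the computable indicator functional $F_{(M,m)}$; these indices can be generated uniformly from $(M,m)$, since the indicators are uniformly computable. $W$ collects the pairs whose answer is $1$, sorts them by decreasing first coordinate, and outputs the resulting list. The description of $W$ depends only on $L$ and $\mathcal{U}$, not on $n$, $k$, or $u$.

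\textbf{Step 2 (correctness).} If $p$ is a witness for $K_{\RC}(u_{0:n})$, then by \cref{def:kR} every call $\mathcal{U}(p, F_{(M,m)})$ halts and returns $\mathbf{1}[(M,m) \in \stack_n]$. So $W$ halts and outputs the member set of $\stack_n$ in the order of decreasing maxima. By \cref{def:stack} the maxima are strictly decreasing, so this member set determines the ordered stack, exactly as in the proof of \cref{lem:recon}.

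\textbf{Step 3 (length count).} Concatenate a prefix-free code for $W$ with $p$. Since $p$ is self-delimiting (it is a program for the prefix machine), no length header is needed. This gives $K(\stack_n) \leq |p| + c_L = K_{\RC}(u_{0:n}) + c_L$.

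The main obstacle is Step 1, specifically the claim that $W$ has constant length. Two points need checking. First, halting: \cref{def:kR} only constrains $p$ on inputs $F \in \RC$. We must therefore feed $p$ only legitimate indices of rate-independent functionals, which is why the indicators, and not arbitrary codes, are queried. Second, dependence on $L$: the enumeration over $\GL^2$ hardwires $L$, so the constant is $O(1)$ only for fixed resolution, matching the abstract's claim that the overhead depends only on $L$. I would make this explicit by stating the constant as $c_L = K(L) + O(1)$, which is uniform in $n$ and $k$.
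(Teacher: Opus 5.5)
Your route is the paper's: the upper bound is \cref{cor:upper}, and the lower bound is \cref{lem:recon} applied to the witnessing program, followed by the invariance bound. Your version is more careful than the paper's, which writes $K(R^*(u_{0:n}))=K_{\RC}(u_{0:n})$ for an unspecified ``optimal representation''. You take $p$ itself as the representation, generate the indicator indices uniformly in $(M,m)$, query $p$ only on genuine $\RC$-indices so that every call halts, and record $c_L=K(L)+O(1)$; the paper leaves all of these implicit.

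Step~2, however, fails as written, and the failure is inherited verbatim from \cref{lem:recon}. By \cref{def:stack} the bottom pair may contain a boundary symbol: \cref{alg:stack} gives $\stack_2=[(1,\bot)]$ for $u=(0,1,0)$ and $\stack_2=[(\top,0)]$ for $u=(1,0,1)$. No indicator with $(M,m)\in\GL^2$ fires on such a pair. So your $W$ prints the empty list on $(0,1,0)$, exactly as it does on $(0,1)$, whose stack really is empty. The repair is cheap: enumerate over the finite set $(\GL\cup\{\bot,\top\})^2$ instead. These indicators are rate-independent by the same argument as \cref{lem:ri}, and the family is still $O(L^2)$. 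Sorting by decreasing first coordinate, with $\top$ largest, still recovers the order, since $\top$ can occur only as $M_1$ and $\bot$ only as $m_1$.

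Separately, by declaring the upper bound to be ``exactly \cref{cor:upper}'' you, like the paper, import the ($\Rightarrow$) direction of \cref{thm:charact}. Its proof asserts that equal stacks force equal sequences of extrema. They do not, because wiped extrema are forgotten. Take $w_r=(0,\Delta,\ldots,0,\Delta,0,1,0)$ with $r\ge 0$ copies of the block $(0,\Delta)$. \cref{alg:stack} assigns every $w_r$ the same stack $[(1,\bot)]$, yet $w_r$ has $2r+1$ direction reversals.

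The reversal count depends only on the order of extrema, not on their timing. Under the reading of \cref{def:ri} that the proof of \cref{lem:ri} relies on, it therefore belongs to $\RC$. (Under the literal reading, invariance under passing to subsequences $u\circ\phi$, $\RC$ contains only constants and \cref{lem:ri} itself fails.) On this family $K(\stack_n)$ is constant, while $K_{\RC}(w_r)\ge K(2r+1)-O(1)$ is unbounded. So the upper bound in \eqref{eq:main} fails for that class even at fixed $L$.

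What the upper bound actually needs is the wiping-out property: every $F\in\RC$ must be a function of $\stack_n$. Rate-independence does not supply it. If you take $\RC$ to be the class of computable functions of $\stack_n$, both halves of your argument go through, the lower bound with the patch above.
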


\begin{proof}
\textbf{Lower bound.}
By \cref{lem:recon}, there exists a computable $\Phi$ with
$\Phi(R^*(u_{0:n})) = \stack_n$, where $R^*$ is the
$K_{\RC}$-optimal representation.  Therefore:
\[
  K(\stack_n) \leq K(R^*(u_{0:n})) + O(1) = K_{\RC}(u_{0:n}) + O(1),
\]
since $\stack_n$ is computable from the optimal representation by a
program of fixed length $O(1)$ (the description of $\Phi$).
Rearranging gives the lower bound.

\textbf{Upper bound.}
The upper bound follows directly from \cref{cor:upper}: since every
$F[u](n)$ is computable from $\stack_n$ by a fixed $O(1)$-length
program, the prefix-free invariance theorem gives
$K_{\RC}(u_{0:n}) \leq K(\stack_n) + O(1)$.
\end{proof}

\subsection{Tightness of the bounds}

The bounds in \eqref{eq:main} are tight up to the additive constant:
there exist sequences for which
$K_{\RC}(u_{0:n}) \leq K(\stack_n) + O(1)$ and
$K_{\RC}(u_{0:n}) \geq K(\stack_n) - O(1)$ simultaneously (e.g.\ when
$\stack_n$ has a short description that immediately answers all
indicator queries).  The $O(1)$ gap is the unavoidable overhead of
prefix-free encoding on a universal machine \citep{LiVitanyi2008}.

\subsection{Independence from $n$; dependence on $L$}

The $O(1)$ overhead in \eqref{eq:main} is independent of both $n$ and
$k$.  For slowly varying signals (small $k$), the compression ratio
$n/k$ can be arbitrarily large while the optimality gap remains
bounded by a constant.  The stack-based representation is therefore
well-suited for streams with long monotone runs: industrial sensor
data (SCADA), EEG, and financial tick data.

\begin{remark}[Dependence of the constants on the resolution $L$]
\label{rem:L-dep}
The additive constants in \eqref{eq:main} are \emph{not} absolute:
they depend on the grid resolution $L$ (equivalently, on the alphabet
size $|\GL| = L+1$), because the reconstruction program $\Phi$ of
\cref{lem:recon} enumerates the indicator family of size $O(L^2)$,
and the description of $\Phi$ must specify $L$.  A crude bound on the
lower-bound constant is $O(\log L)$ (the length of a self-delimiting
encoding of $L$ plus a fixed enumerator); the upper-bound constant is
likewise $O(\log L)$.  What the theorem asserts is \emph{uniformity
in $n$ and $k$ for each fixed $L$}: the optimality gap does not grow
with the length of the stream or the depth of the stack.
\end{remark}

\subsection{Comparison with existing compression schemes}

\Cref{tab:compare} compares PSTACK-COMPRESS (the stack-based
algorithm implied by \cref{thm:kolmo}) with standard time-series
compression methods.  PSTACK-COMPRESS differs from the rest in
providing a \emph{formal optimality guarantee}: no other listed
method bounds compression length via Kolmogorov complexity for any
function class.  The comparison is complementary to the streaming
compressors discussed in \cref{sec:related}
\citep{Pelkonen2015,Liakos2022,ZivLempel1977}, which target bit-level
redundancy rather than a semantic query class.

\begin{table}[t]
\centering
\caption{Comparison of time-series compression algorithms.
$n$ = sequence length; $k$ = stack depth ($k \leq n$);
$w, s$ = window/segment parameters.
\textdagger{} = asymptotically optimal for class $\RC$.}
\label{tab:compare}
\small
\begin{tabular}{lcccp{3.2cm}}
\toprule
Algorithm & Time & Space & Optimality & Class preserved \\
\midrule
PAA \citep{Keogh2001} & $O(n)$ & $O(w)$ & None &
  $L^2$-approximation \\
SAX \citep{Lin2003} & $O(n)$ & $O(w)$ & None &
  Symbolic, not rate-independent \\
PLR (Piecewise Linear Repr.) & $O(n\log n)$ & $O(s)$ & None &
  $L^\infty$ piecewise linear \\
PIP & $O(n^2)$ & $O(s)$ & None &
  Heuristic important points \\
Swinging Door & $O(n)$ & $O(1)$ & None &
  Linear with tolerance \\
\midrule
\textbf{PSTACK} (ours) & $O(n)$ & $O(k)$ &
  \textbf{\textdagger{} Kolmogorov} &
  \textbf{Full class $\RC$ (rate-independent)} \\
\bottomrule
\end{tabular}
\end{table}

\subsection{Connection to the Preisach operator}

The sufficiency part of \cref{thm:kolmo} is a consequence of the
classical \emph{wiping property} of the Preisach hysteresis operator
\citep{Mayergoyz1991}: a new extremum erases all prior extrema of
smaller magnitude, and the output depends only on the surviving
stack.  The minimality part --- which appears to be new --- shows
that this compression is not merely sufficient but \emph{necessary}.
No computable rate-independent functional ``sees'' less than
$\stack_n$.

\section{Shannon Minimality}
\label{sec:shannon}

Let $u_{0:n}$ be a random sequence drawn from a probability measure
$P$ on $\GL^{n+1}$ (the sequence itself is the random object, not a
parameter).  All statements in this section hold for \emph{every}
$P$; they are informative for non-degenerate $P$ with
$\MI(u_{0:n};\stack_n) > 0$, and reduce to trivial equalities
$0 \leq 0$ when $P$ is a point mass (all mutual informations vanish).
$\En$ and $\MI$ denote Shannon entropy and mutual information
\citep{CoverThomas2006}.

\subsection{Mutual information equality}

\begin{theorem}[MI equality]
\label{thm:mi}
For every $F \in \RC$:
\begin{equation}
  \MI(u_{0:n};\, F[u](n)) = \MI(\stack_n;\, F[u](n)).
  \label{eq:mi_eq}
\end{equation}
\end{theorem}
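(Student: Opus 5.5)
The plan is to obtain \eqref{eq:mi_eq} from two applications of the data-processing inequality to a Markov chain built from the characterisation theorem. Fix $F \in \RC$ and write $Y = F[u](n)$. By \cref{thm:charact} there is a computable $f$ with $Y = f(\stack_n)$. Also, $\stack_n = \stack_n(u_{0:n})$ is a deterministic function of the input, computed by \cref{alg:stack}. So under any measure $P$ on $\GL^{n+1}$ the three random variables form the chain
\[
  u_{0:n} \;\to\; \stack_n \;\to\; Y .
\]
This holds because $Y$ is a deterministic function of $\stack_n$. Hence $Y$ is conditionally independent of $u_{0:n}$ given $\stack_n$.

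The key steps, in order, are as follows.

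\emph{(i) Forward inequality.} Data processing applied to this chain gives $\MI(u_{0:n};Y) \leq \MI(\stack_n;Y)$.

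\emph{(ii) Reverse inequality.} Since $\stack_n$ is a function of $u_{0:n}$, the chain $\stack_n \to u_{0:n} \to Y$ also holds trivially. Data processing then gives $\MI(\stack_n;Y) \leq \MI(u_{0:n};Y)$. The two inequalities together yield equality.

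As an alternative check, expand by the chain rule:
\[
  \MI(u_{0:n};Y) = \MI(\stack_n;Y) + \MI(u_{0:n};Y \mid \stack_n),
\]
valid because $\stack_n$ is a function of $u_{0:n}$. The conditional term vanishes since $\En(Y \mid \stack_n) = 0$.

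I expect the main obstacle to be technical rather than conceptual: making sure all quantities are well defined and finite. Everything lives on finite alphabets. The input space $\GL^{n+1}$ is finite, and so is the image of $\stack_n$, since its depth is bounded by \cref{def:stack}. The variable $Y$ takes at most $|\GL|^{n+1}$ real values, so it is a discrete random variable with finite support. All entropies are therefore finite, and the chain rule and data-processing inequality apply without measure-theoretic caveats.

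One further point to state explicitly is that computability of $f$ plays no role here. Only the measurability (indeed, the mere existence) of the factorisation $Y = f(\stack_n)$ from \cref{thm:charact} is used. The argument therefore holds for every $P$, including point masses, where both sides are $0$.
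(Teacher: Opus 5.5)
Your proof is correct, but it takes a slightly different route from the paper's.

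The paper argues directly from the definition of mutual information. Since $F[u](n)$ is a function of $u_{0:n}$, and by \cref{thm:charact} also a function of $\stack_n$, both $\En(F[u](n)\mid u_{0:n})$ and $\En(F[u](n)\mid\stack_n)$ vanish. Hence both sides of \eqref{eq:mi_eq} equal $\En(F[u](n))$.

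You instead sandwich the two quantities with the data-processing inequality, applied to the chains $u_{0:n}\to\stack_n\to Y$ and $\stack_n\to u_{0:n}\to Y$. In step (i) you are implicitly using the reversed chain $Y\to\stack_n\to u_{0:n}$, which is standard. Your chain-rule cross-check uses $\MI(u_{0:n};Y\mid\stack_n)\le\En(Y\mid\stack_n)=0$, which is essentially the paper's observation in another form.

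The two routes buy different things. The paper's identifies the common value explicitly as $\En(F[u](n))$, which is what \cref{rem:trivial} emphasises: the stack retains the full entropy of every final-time query. Your route never computes that value, but it assumes less:
\begin{itemize}
\item step (ii) holds for any $Y$ whatsoever, because $\stack_n$ is a function of $u_{0:n}$;
\item step (i) needs only that $Y$ and $u_{0:n}$ are conditionally independent given $\stack_n$.
\end{itemize}
So your equality would survive for randomised readouts of the stack. There $\En(Y\mid u_{0:n})\neq 0$, and the paper's computation no longer applies.

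Your remarks on finiteness of all alphabets, and on computability of $f$ playing no role, are accurate and somewhat more careful than the paper's proof.
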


\begin{proof}
By \cref{thm:charact}, $F[u](n) = f(\stack_n)$ for a computable $f$,
and $\stack_n = \psi(u_{0:n})$ for a computable $\psi$.  Hence both
$F[u](n)$ and $\stack_n$ are deterministic functions of $u_{0:n}$,
giving:
\[
  \En(F[u](n) \mid u_{0:n}) = 0
  \quad\text{and}\quad
  \En(F[u](n) \mid \stack_n) = 0.
\]
Therefore:
\begin{align*}
  \MI(u_{0:n};\,F[u](n))
    &= \En(F[u](n)) - \En(F[u](n)\mid u_{0:n})
     = \En(F[u](n)),\\
  \MI(\stack_n;\,F[u](n))
    &= \En(F[u](n)) - \En(F[u](n)\mid \stack_n)
     = \En(F[u](n)).
\end{align*}
Equality~\eqref{eq:mi_eq} follows.
\end{proof}

\begin{remark}[Triviality and its significance]
\label{rem:trivial}
The equality \eqref{eq:mi_eq} is a direct corollary of
\cref{thm:charact} and the definition of mutual information.  Its
significance lies not in the proof difficulty but in \emph{what it
says}: the stack, despite discarding the timing of all extrema and
the values of all non-extremal points, loses zero mutual information
about any rate-independent query at the final time.  The non-trivial
content is the characterisation theorem itself and, below, the
\emph{minimality} result.
\end{remark}

\subsection{Minimality}

\begin{definition}[$\RC$-sufficient statistic]
\label{def:suff_R}
A random variable $S = S(u_{0:n})$ is \emph{$\RC$-sufficient} if
every $F \in \RC$ satisfies $\En(F[u](n)\mid S) = 0$, i.e.\
$F[u](n)$ is a deterministic function of $S$.
\end{definition}

\begin{remark}
\cref{def:suff_R} formalises ``sufficient for the class $\RC$'' as
the requirement that $S$ determines every $\RC$-query at the final
time $n$.  The stack $\stack_n$ is $\RC$-sufficient by
\cref{thm:charact}.
\end{remark}

\begin{theorem}[Shannon minimality of $\stack_n$]
\label{thm:minimal}
Among all $\RC$-sufficient statistics:
\[
  \MI(u_{0:n};\,\stack_n)
  \;\leq\;
  \MI(u_{0:n};\,S)
  \quad\text{for every $\RC$-sufficient }S.
\]
\end{theorem}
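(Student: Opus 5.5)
The plan is to show that $\stack_n$ is a deterministic function of any $\RC$-sufficient $S$, and then apply the data-processing inequality along the Markov chain $u_{0:n} \to S \to \stack_n$. This is the Shannon analogue of the lower-bound argument in \cref{thm:kolmo}, and it rests on the same indicator family.

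First, fix an $\RC$-sufficient $S$. By \cref{lem:ri}, every indicator $F_{(M,m)}$ from \eqref{eq:indicator} lies in $\RC$. Definition \ref{def:suff_R} then gives $\En(F_{(M,m)}[u](n)\mid S)=0$ for each of the finitely many pairs $(M,m)\in\GL^2$ with $M\ge m$. So on the support of $S$ there are functions $g_{(M,m)}$ with $F_{(M,m)}[u](n)=g_{(M,m)}(S)$ almost surely.

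Second, I reconstruct the stack from these indicators, exactly as in \cref{lem:recon}. Set
\[
\Phi(S)=\{(M,m): g_{(M,m)}(S)=1\},
\]
ordered by decreasing maxima. The members of $\stack_n$ determine the stack uniquely, so $\Phi(S)=\stack_n$ almost surely, and hence $\En(\stack_n\mid S)=0$. The family is finite, so a finite intersection of probability-one events still has probability one; no measure-theoretic issues arise.

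Third, because $S=S(u_{0:n})$ and $\stack_n=\Phi(S)$, the variables $u_{0:n}\to S\to\stack_n$ form a Markov chain. The data-processing inequality \citep{CoverThomas2006} gives $\MI(u_{0:n};\stack_n)\le\MI(u_{0:n};S)$. Equivalently, since both are functions of $u_{0:n}$, we have $\MI(u_{0:n};\stack_n)=\En(\stack_n)\le\En(S)=\MI(u_{0:n};S)$, which follows from $\En(\stack_n)=\En(\Phi(S))\le\En(S)$.

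The main obstacle is a subtle point in the first step: the boundary symbols $\top,\bot$ can appear in the bottom pair, so $\stack_n$ takes values in $\bar{\mathcal{V}}^2$ rather than $\GL^2$. The indicator family must therefore be extended to pairs involving $\top$ and $\bot$, which is still a finite family. Otherwise $\Phi$ would miss the bottom pair. This extension is routine, and with it the rest of the argument is purely formal.
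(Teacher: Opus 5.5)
Your proof is correct and follows the paper's own argument: you recover $\stack_n$ as a deterministic function $\Phi(S)$ of any $\RC$-sufficient $S$ using the finite indicator family of \cref{lem:ri,lem:recon}, and then apply the data-processing inequality. Your extension of the indicator family to pairs containing $\top$ or $\bot$ is a genuine small repair, not a detour: the paper indexes the indicators only by $(M,m)\in\GL^2$, so as written its reconstruction cannot distinguish stacks such as $[(\top,m_1)]$ and $[(\top,m_1')]$ with $m_1\neq m_1'$, and your enlarged family, which is still finite, closes exactly that gap.
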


\begin{proof}
Let $S$ be $\RC$-sufficient.  The finite indicator family
$\{F_{(M,m)}\}_{(M,m)\in\GL^2}$ defined by
$F_{(M,m)}[u](n) = \mathbf{1}[(M,m)\in\stack_n]$ lies in $\RC$
(\cref{lem:ri}).  Since $S$ is $\RC$-sufficient, each
$F_{(M,m)}[u](n)$ is a function of $S$.  As the finite collection
$\{F_{(M,m)}\}$ jointly determines $\stack_n$ (\cref{lem:recon}),
there exists a computable $\Phi$ such that $\stack_n = \Phi(S)$.
By the data-processing inequality
\citep[Theorem~2.8.1]{CoverThomas2006}:
\[
  \MI(u_{0:n};\,\stack_n)
  = \MI(u_{0:n};\,\Phi(S))
  \leq \MI(u_{0:n};\,S). \qedhere
\]
\end{proof}

\begin{corollary}[Uniqueness up to $\RC$-equivalence]
\label{cor:unique}
If $S$ is $\RC$-sufficient and
$\MI(u_{0:n};S) = \MI(u_{0:n};\stack_n)$, then $S$ and $\stack_n$ are
functions of each other (i.e.\ they generate the same
$\sigma$-algebra up to $P$-null sets).
\end{corollary}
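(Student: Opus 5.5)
We prove the corollary by showing that each of $S$ and $\stack_n$ is a function of the other. The direction $\stack_n = \Phi(S)$ holds for every $\RC$-sufficient $S$, as in the proof of \cref{thm:minimal}. The converse, that $S$ is a function of $\stack_n$ almost surely, is the real content. It must come from equality in the data-processing inequality, specialised to the case where all variables are deterministic functions of $u_{0:n}$.

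\textbf{Step 1 (reduce to entropies).} Both $S = S(u_{0:n})$ and $\stack_n = \psi(u_{0:n})$ are deterministic functions of $u_{0:n}$, so $\En(S\mid u_{0:n}) = \En(\stack_n \mid u_{0:n}) = 0$. Hence
\[
\MI(u_{0:n};S) = \En(S), \qquad \MI(u_{0:n};\stack_n) = \En(\stack_n).
\]
Both entropies are finite, since $\GL^{n+1}$ is finite and $S$ takes at most $|\GL|^{n+1}$ values. The hypothesis therefore reads $\En(S) = \En(\stack_n)$.

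\textbf{Step 2 (chain rule).} Since $\stack_n = \Phi(S)$, we have $\En(\stack_n\mid S)=0$, and expanding $\En(S,\stack_n)$ in two ways gives
\[
\En(S) = \En(S,\stack_n) = \En(\stack_n) + \En(S\mid \stack_n).
\]
Combined with Step 1, this yields $\En(S\mid\stack_n) = 0$.

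\textbf{Step 3 (zero conditional entropy implies a function).} Write
\[
\En(S\mid\stack_n) = \sum_{\sigma} P(\stack_n=\sigma)\,\En(S\mid\stack_n=\sigma).
\]
This sum vanishes only if, for every $\sigma$ with $P(\stack_n=\sigma)>0$, the conditional law of $S$ given $\stack_n=\sigma$ is a point mass $g(\sigma)$. Then $S = g(\stack_n)$ outside a $P$-null set.

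Together with $\stack_n=\Phi(S)$ everywhere, $S$ and $\stack_n$ generate the same $\sigma$-algebra modulo null sets. Note that $g$ need not be computable, and it is determined only on the support of $P$; this is why the statement is phrased up to $P$-null sets.

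\textbf{Main obstacle.} The only delicate step is Step 1. Equating mutual information with entropy needs $S$ to be a deterministic function of $u_{0:n}$, which \cref{def:suff_R} assumes through the notation $S = S(u_{0:n})$, and it needs finite entropy so that the subtraction in Step 2 is legitimate. Both hold on the finite grid $\GL$. If $S$ were allowed extra randomness, the argument would instead be run through the Markov chain $u_{0:n}\to S\to\stack_n$, using the equality case $\MI(u_{0:n};S\mid\stack_n)=0$. That would give only conditional independence of $S$ and $u_{0:n}$ given $\stack_n$, not functional dependence. So we will insist on the deterministic reading.
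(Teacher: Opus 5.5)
Your proof is correct and follows essentially the same route as the paper: both take $\stack_n=\Phi(S)$ from \cref{thm:minimal} and use that $S$ is a deterministic function of $u_{0:n}$ to force the equality case of the data-processing inequality. Your Steps~1--3 spell out the paper's one-line assertion that equality holds iff $\Phi$ is injective $P$-a.s.\ (equivalently $\En(S\mid\stack_n)=0$), and your closing remark correctly identifies determinism of $S$ as essential, since with $S=(\stack_n,Z)$ for independent nondegenerate noise $Z$ the conclusion would fail.
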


\begin{proof}
By \cref{thm:minimal}, $\stack_n = \Phi(S)$ and
$\MI(u;\stack_n) \leq \MI(u;S)$.  Equality holds iff $\Phi$ is
injective $P$-a.s., which combined with $S$ being $\RC$-sufficient
(hence determined by $u_{0:n}$) implies $S = \Psi(\stack_n)$ for some
$\Psi$ a.s.
\end{proof}

\begin{remark}[Two minimalities, one object]
\label{rem:two-minimalities}
Kolmogorov minimality (\cref{thm:kolmo}) holds \emph{worst-case for
every individual sequence}; Shannon minimality (\cref{thm:minimal})
holds \emph{in expectation under any probability measure} on the
input.  Neither implies the other: the Kolmogorov bound says nothing
about mutual information under a measure concentrated on compressible
sequences, and the Shannon bound says nothing about individual
adversarial inputs.  Both, however, identify the same object
$\stack_n$ as the canonical information-minimal representation for
$\RC$ --- and both proofs are powered by the same indicator family
(\cref{lem:ri,lem:recon}).
\end{remark}

\section{Worst-Case Update Complexity}
\label{sec:complexity}

\cref{alg:stack} runs in amortised $O(1)$ per step, but an
adversarial input stream can force $O(n)$ total pops followed by a
single push, giving worst-case $\Theta(n)$ cost per step.  This
section determines whether the amortised bound can be de-amortised
--- as is common for de-amortised data structures
\citep{Sleator1985,Brodal1996} --- while preserving the
$K_{\RC}$-minimality guarantee of \cref{thm:kolmo}.  The
representations whose cost we bound are exactly the minimal ones of
the previous sections.

Throughout this section we work in the \emph{algorithmic setting} of
\cref{sec:settings}: the value domain $\mathcal{V}$ is a general
totally ordered set (e.g.\ $\Rr$), or equivalently the grid $\GL$
with the resolution treated as a growing parameter
$L = \Omega(n)$, so that the stack depth $k$ can grow unboundedly
with $n$.  All structural properties used below (strict monotonicity
of maxima and minima, the wiping rule, \cref{alg:stack}) hold
verbatim over any totally ordered domain.  On a fixed finite grid,
$k \leq \lceil (L+1)/2 \rceil$ is a constant in $n$
(\cref{def:stack}) and the bounds below, while still valid, are only
informative as functions of $k$ and $L$ rather than of $n$.

\begin{definition}[$\RC$-minimal representation]
\label{def:repr}
An online algorithm $\mathcal{A}$ maintains an \emph{$\RC$-minimal
representation} if, at every step $n$, its state $S_n$ satisfies:
(i) every $F \in \RC$ is computable from $S_n$, and
(ii) $K(S_n) \leq K(\stack_n) + O(1)$.
\end{definition}

By \cref{thm:kolmo}, the extremum stack itself is the canonical
$\RC$-minimal representation.

\begin{definition}[Compact representation]
\label{def:compact}
An $\RC$-minimal representation $S_n$ is \emph{compact} if it contains
no \emph{dead pairs}: pairs $(M_i, m_i)$ that are dominated by a
newer entry and cannot affect any future $\RC$-query.  The canonical
stack $\stack_n$ is compact by construction.  A \emph{non-compact}
representation may retain dead pairs (marked as inactive), reducing
per-step update cost at the expense of a larger state.
\end{definition}

\subsection{Compact representations: $\Theta(k)$ worst-case in the
output-change model}

\begin{definition}[Output-change model]
\label{def:ocm}
The \emph{output-change cost} of one step of an online algorithm
$\mathcal{A}$ is the number of pairs $(M_i,m_i)$ whose logical status
changes from active to inactive (or vice versa) in that step.  This
model is machine-independent: a linked list pays one pointer
operation per changed pair; a RAM array truncates in $O(1)$ memory
operations but still produces $d$ logical status changes.  The
output-change model is the natural cost measure for rate-independent
computations, where the observable effect is the change in the active
relay set.
\end{definition}

\begin{theorem}[Output-change lower bound for compact
representations]
\label{thm:lower}
Let $\mathcal{A}$ be any deterministic online algorithm maintaining a
compact, exact $\RC$-minimal representation over a totally ordered
value domain $\mathcal{V}$ with $|\mathcal{V}| \geq n+1$.  In the
output-change model, for every $n \geq 2$ there exists a sequence
$u_{0:n-1} \in \mathcal{V}^n$ such that $\mathcal{A}$ incurs
$\Theta(k_{n-1})$ output changes at step $n-1$, where
$k_{n-1} = |\stack_{n-2}| = \Theta(n)$ is the stack depth just before
that step.
\end{theorem}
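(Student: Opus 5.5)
The plan is an adversarial construction: build a stack of depth $\Theta(n)$, then wipe all of it at once with a single new extremum. Take $n-1$ strictly increasing values $v_1<v_2<\cdots<v_{n-1}$ from $\mathcal{V}$, which exist because $|\mathcal{V}|\ge n+1$. Interleave them into a converging zig-zag with decreasing maxima and increasing minima, for instance
\[
u_0=v_{n-2},\; u_1=v_1,\; u_2=v_{n-3},\; u_3=v_2,\;\ldots,
\]
continuing until the two ends meet. The last entry $u_{n-1}$ is then set to a value above every earlier maximum; we reserve $v_{n-1}$ for it.

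The first step is to run \cref{alg:stack} on the prefix $u_{0:n-2}$ and check that it only pushes and never pops. Each reversal confirms an extremum that lies strictly inside the previous pair: a new maximum is smaller than every stored maximum, and a new minimum is larger than every stored minimum. The while-loop guards therefore fail immediately. Since every step of the prefix is a reversal, this gives $|\stack_{n-2}|=\Theta(n)$, and $\lfloor (n-1)/2\rfloor$ up to the boundary pair is the expected count.

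The second step handles the final input. The value $u_{n-1}=v_{n-1}$ exceeds $M_1$ whenever $M_1\neq\top$, and if the input begins by rising then $M_1$ is an actual value we have dominated. The staircase determined by $\stack_{n-1}$ together with $u_{n-1}$ then has all inner corners erased: every relay with $\alpha\le u_{n-1}$ switches on, so only the outer corner survives. If the final step is not itself a confirmed reversal, we append one extra down-step; this shifts the index by one and does not change the $\Theta$ bound.

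The third step transfers this to an arbitrary compact exact $\RC$-minimal $\mathcal{A}$. By exactness and \cref{lem:recon}, the indicator functionals $F_{(M,m)}$ of \cref{lem:ri} are computable from $S_n$. Compactness says $S_n$ holds no dead pairs, so its active pairs are exactly the members of the current staircase. The active set is therefore determined from the $\RC$-queries alone, independently of how $\mathcal{A}$ stores its state. Before the last step it contains $k_{n-1}$ inner pairs; afterwards it contains $O(1)$ of them. At least $k_{n-1}-O(1)$ pairs must change status, which is the lower bound. The matching upper bound is immediate, because at most the $k_{n-1}$ existing pairs plus one pushed pair can change.

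I expect the main obstacle to be this third step. The argument must rule out an $\mathcal{A}$ that labels its pairs differently from $\stack_n$, for example by lazily deferring deactivation. My approach is to define "active" semantically through the indicator queries at each time $t$ and use exactness at every step. A deactivation that has not happened at step $n-1$ would then make some $F_{(M,m)}$ wrong at that very step. The remaining care is the boundary-symbol bookkeeping with $\top$ and $\bot$ when counting exactly which pairs survive.
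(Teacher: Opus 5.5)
Your proposal is correct and takes essentially the same route as the paper. A converging zig-zag with strictly decreasing maxima and strictly increasing minima builds a stack of depth $\Theta(n)$ with no pops, and a final value above every stored finite maximum then wipes it all at once, forcing $\Theta(k)$ status changes in any exact representation; the details you add make explicit what the paper's proof passes over without changing the argument, namely the extra down-step needed because \cref{alg:stack} confirms the new maximum only at the next reversal, excluding the $\top$ boundary pair from the wiped count, and defining ``active'' through the indicator queries $F_{(M,m)}$ so that the count does not depend on how $\mathcal{A}$ stores its state.
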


\begin{proof}
\medskip\noindent\textbf{Adversarial construction.}
Work in the unit interval (rescale as needed) with step
$\delta = 1/(2n)$.  For $t = 0,1,\ldots,n-2$ alternate: for $t$ even
set $u_t = 1 - t\delta$; for $t$ odd set $u_t = t\delta$.  All values
lie in $(0,1)$ and the even values decrease strictly while the odd
values increase strictly, so this creates
$k = \lfloor n/2 \rfloor$ alternating dominant extrema and
$|\stack_{n-2}| = k$.  Set $u_{n-1} = 1$ (global maximum, larger than
all $M_i$).  On the grid $\GL$ the same construction with
$\delta = \Delta$ is valid provided $n \leq L + 1$; for fixed $L$
and $n > L+1$ no such construction exists, since
$k \leq \lceil (L+1)/2 \rceil$ (\cref{def:stack}) --- this is the
regime distinction of \cref{sec:settings}.

\medskip\noindent\textbf{Output changes are $\Theta(k)$.}
At step $n-1$ every pair $(M_i,m_i)$ satisfies $M_i < 1$, so all $k$
pairs are wiped --- each changes status from active to inactive:
$\Theta(k)$ output changes.  Any algorithm maintaining exact
$\RC$-minimality must record these $k$ status changes regardless of
data structure.
\end{proof}

\begin{remark}[Randomised algorithms]
\label{rem:randomized}
The lower bound extends to randomised algorithms maintaining an
\emph{exact} compact representation.  Output changes are an
information-theoretic observable: at step $n-1$ of the construction,
all $k$ pairs deterministically leave the active relay set, and any
representation from which every $\RC$-query is exactly computable
must reflect all $k$ status changes --- randomisation can reorder or
defer the bookkeeping, but not reduce the number of logical changes
without violating exactness.  Randomisation could help only for
\emph{approximate} representations, which are outside the scope of
this paper.
\end{remark}

\begin{remark}[RAM model]
\label{rem:ram}
In a word-RAM model, array truncation costs $O(1)$ memory operations
by adjusting a length counter.  This does not reduce the $\Theta(k)$
output changes: $k$ pairs logically leave the active set, and any
downstream $\RC$-query must observe this.  The output-change lower
bound is therefore model-independent.
\end{remark}

\begin{remark}[Tightness]
\label{rem:tight}
The standard array-based stack with $O(1)$ truncation achieves $O(1)$
memory operations (amortised) while incurring $\Theta(k)$ output
changes in the adversarial case --- matching the lower bound exactly
in the output-change model.
\end{remark}

\subsection{Binary search: the wiping property reduces search, not
deletion}

The wiping property provides \emph{structural information} about the
order of extrema.  Since $M_1 > M_2 > \cdots > M_k$, domination by a
new value $u_{\text{new}}$ is a \emph{monotone} predicate on the
ordered stack.  This structure enables binary search.

\begin{theorem}[Binary search reduces worst-case to $O(\log k + d)$]
\label{thm:binsearch}
Let $\mathcal{A}_{\mathrm{bs}}$ be the online algorithm that stores
$\stack_n$ in a random-access array and uses binary search to find
the domination boundary, followed by bulk deletion of dominated
pairs.  Then:
\begin{enumerate}
  \item \textbf{Boundary search:} $O(\log k)$ comparisons per step,
    exploiting the monotone ordering $M_1 > M_2 > \cdots > M_k$.
  \item \textbf{Deletion cost:} $O(d)$ per step, where $d \leq k$ is
    the number of pairs actually deleted.  Total deletion cost over
    $n$ steps: $O(n)$ (each pair deleted at most once).
  \item \textbf{Worst-case per step:} $O(\log k + d)$.  In the
    adversarial construction of \cref{thm:lower},
    $d = k = \Theta(n)$ and the bound is $\Theta(n)$ --- identical to
    the compact list stack.
  \item \textbf{Amortised cost:} $O(\log k)$ per step (the $O(d)$
    deletion is amortised over insertions).
\end{enumerate}
\end{theorem}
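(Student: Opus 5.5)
The plan is to analyse one step of \cref{alg:stack} with the stack held in a random-access array $A[1..k]$, where $A[i]=(M_i,m_i)$, together with a length counter. I will first treat the case $d=+1$, where $e_{\mathrm{prev}}$ is a confirmed maximum. The case $d=-1$ is symmetric: it uses the increasing minima $m_1<\cdots<m_k$ and the predicate $m_i>e_{\mathrm{prev}}$.

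The first step establishes monotonicity of the wiping predicate. Set $P(i)=\mathbf{1}[M_i<e_{\mathrm{prev}}]$. Since $M_1>M_2>\cdots>M_k$ (\cref{def:stack}), $P$ is non-decreasing in $i$: if $M_i<e_{\mathrm{prev}}$ then $M_{i+1}<M_i<e_{\mathrm{prev}}$. The while-loop of \cref{alg:stack} pops exactly the maximal suffix on which $P=1$. That suffix is $\{j^*,\dots,k\}$, where $j^*=\min\{i:P(i)=1\}$, with $j^*=k+1$ if the suffix is empty. A standard binary search finds $j^*$ in $\lceil\log_2(k+1)\rceil$ comparisons, which gives item~1. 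For correctness, the algorithm must also reproduce the loop's final value of $m_{\mathrm{last}}$. This value is $m_{j^*}$, the minimum of the outermost wiped pair, when $j^*\le k$, and it is the top pair's $m_k$ otherwise. In both cases it is a single array read. The resulting stack is therefore identical to the one produced by \cref{alg:stack}, so exactness and $\RC$-minimality are inherited from \cref{thm:kolmo}.

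For item~2, the algorithm physically removes entries $j^*,\dots,k$ (clearing or releasing them) and pushes one pair. This costs $O(d)$ with $d=k-j^*+1$. Each pair is pushed at most once, and at most one push occurs per step, so there are at most $n$ pushes in total. Since a pair can be deleted only after it has been pushed, $\sum_t d_t\le n$.

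Item~3 follows by adding the two costs, $O(\log k+d)$. For the lower-bound instance, I will replay the adversarial sequence of \cref{thm:lower}. At step $n-1$ the whole stack is dominated, so $j^*=1$ and $d=k=\Theta(n)$, giving $\Theta(n)$. Item~4 follows from an aggregate argument. Over $n$ steps the total cost is at most $\sum_t O(\log k_t)+O(n)$, which is at most $nO(\log k)+O(n)$, where $k$ denotes the maximum depth. This amortises to $O(\log k)$ per step, because the deletions contribute only $O(1)$ amortised per step by item~2.

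The main obstacle is making the cost model in item~2 precise against \cref{rem:ram}. If truncation is charged $O(1)$ by resetting a counter, then ``deletion cost $O(d)$'' becomes an upper bound rather than a tight figure. I will therefore state explicitly that deletion is charged per logically deactivated pair, as in \cref{def:ocm}, or equivalently as physical clearing of the deleted entries. Under that convention the $\Theta(n)$ lower bound in item~3 is exactly the output-change bound of \cref{thm:lower}.
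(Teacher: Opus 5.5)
Your proposal is correct and follows the paper's proof essentially step for step: the monotone domination predicate gives binary search, each pair is deleted at most once so total deletion is $O(n)$, the per-step bound is the sum of the two costs with the instance of \cref{thm:lower} replayed for tightness, and item~4 is an aggregate count. Your version of that count actually delivers the stated $O(\log k)$ rather than the $O(\log n)$ the paper's proof writes out, and your checks on $m_{\mathrm{last}}$ and on the deletion cost model are welcome refinements.

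One small slip: you search with the confirmed extremum $e_{\mathrm{prev}}$ at a reversal, as in \cref{alg:stack}, whereas the paper's proof searches with the new sample. Under your model the global wipe in that instance happens one step after $u_{n-1}=1$ (append any $u_n<1$) and spares the bottom pair with $M_1=\top$, so $j^*=2$ and $d=k-1$, which is still $\Theta(n)$.
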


\begin{proof}
\textbf{(1) Binary search.}
Since $M_1 > M_2 > \cdots > M_k$, the predicate
$M_i < u_{\text{new}}$ is monotone in $i$.  Binary search on the
random-access array finds $i^* = \min\{i : M_i < u_{\text{new}}\}$ in
$\lceil \log_2 k \rceil$ comparisons.

\textbf{(2) Deletion.}
Pairs $i^*, i^*+1, \ldots, k$ are deleted by setting the stack
pointer to $i^*-1$ (array truncation: $O(1)$ if implemented as a
length variable; $O(d)$ if memory is reclaimed immediately).  Each
pair is inserted once and deleted at most once, giving $O(n)$ total
deletion cost.

\textbf{(3) Worst-case.}
In the adversarial construction, $u_{n-1} = 1 > M_1$, so $i^* = 1$
(found in $O(1)$ by binary search) but $d = k$ deletions follow.
Worst-case cost: $O(\log k) + O(k) = O(k) = \Theta(n)$.  Binary
search reduces the \emph{search} cost but not the \emph{deletion}
cost.

\textbf{(4) Amortised.}
The total cost over $n$ steps is $O(n \log k) + O(n) = O(n \log n)$,
giving $O(\log n)$ amortised --- identical to the standard stack's
$O(1)$ amortised up to the $\log$ factor.  The standard stack
achieves $O(1)$ amortised by charging deletion to insertion; binary
search pays an extra $O(\log k)$ per step for the boundary search.
\end{proof}

\begin{remark}[The structural information of wiping]
\label{rem:wipe_info}
The wiping property does reduce the \emph{search} complexity from
$O(k)$ to $O(\log k)$: monotone ordering enables binary search.
However, it does not reduce the \emph{deletion} complexity: a compact
representation must physically remove all dominated pairs, and their
count $d$ is unaffected by the search strategy.  The $\Theta(k)$
worst-case of \cref{thm:lower} is therefore tight for compact
representations regardless of search strategy.  For non-compact
representations with lazy deletion, the per-step cost drops to
$O(\log k)$ at the cost of $O(k_{\text{dead}})$ extra state ---
violating exact $K_{\RC}$-minimality.
\end{remark}

\subsection{Finger-tree stack with $O(\log k)$ exact worst-case}
\label{sec:ftree}

\cref{thm:binsearch} bounds boundary \emph{search} at $O(\log k)$ but
leaves physical deletion at $\Theta(d)$.  We now show that both
search \emph{and} deletion can be performed in $O(\log k)$ worst-case
time using a \emph{finger tree} \citep{Hinze2006} --- an exact
representation with no approximation.

\begin{definition}[Finger-tree Preisach stack]
\label{def:ftree}
A \emph{finger-tree Preisach stack} stores the pairs
$[(M_1,m_1),\ldots,(M_k,m_k)]$ in a 2--3 finger tree
\citep{Hinze2006} in stack order, with the maxima $M_i$ (strictly
decreasing in $i$) serving as monotone keys for search.\footnote{The
2--3 finger tree of Hinze and Paterson is a \emph{sequence}
structure supporting \textsc{Cons}, \textsc{Snoc}, and
\textsc{Split} on a monoidal measure.  We use the standard
ordered-sequence instantiation: nodes are annotated with the
\emph{minimum} key of their subtree --- equivalently its rightmost
key, since the $M_i$ decrease along the sequence --- under the
monoid $(\min, +\infty)$, which is associative with identity
$+\infty$.  The search predicate ``the running minimum has dropped
below $u_{\mathrm{new}}$'' is monotone in the prefix order
(extending a prefix can only lower its minimum), so \textsc{Search}
is the predicate-guided \textsc{Split} of \citep{Hinze2006}.
Annotating with the \emph{maximum} key would not work: a prefix can
have a large leftmost maximum while containing deeper elements below
the threshold.  For the symmetric minimum-reversal branch, the same
nodes carry a second annotation --- the maximum of the minima $m_i$
under the monoid $(\max, -\infty)$ --- so a single tree with a
product-monoid measure
$\bigl((\min,+\infty)\times(\max,-\infty)\bigr)$ serves both
branches.  Equivalently, the structure may be viewed as a
balanced 2--3 search tree with finger access to both ends.}  The
tree supports the
following operations, each in $O(\log k)$ worst-case time (the
end-access operations in $O(1)$ amortised):
(a) \textsc{Search}$(u)$: find the split point
    $i^* = \min\{i : M_i < u\}$ by monotone (measure-guided) search;
(b) \textsc{Split}$(i^*)$: split the tree at $i^*$, returning the
    pair of trees
    $(T_{\mathrm{pre}}, T_{\mathrm{suf}})$ holding
    $[(M_1,m_1),\ldots,(M_{i^*-1},m_{i^*-1})]$ and
    $[(M_{i^*},m_{i^*}),\ldots,(M_k,m_k)]$ respectively, in
    $O(\log k)$ time (not $O(d)$);
(c) \textsc{Push}$(M,m)$: append one pair at the top end in $O(1)$
    amortised, $O(\log k)$ worst-case;
(d) \textsc{ViewLast}$()$ / \textsc{ViewFirst}$()$: read the
    top-end / bottom-end element without removing it.
\end{definition}

\begin{algorithm}[t]
\caption{Finger-Tree Extremum Stack Update}
\label{alg:ftree}
\small
\begin{algorithmic}[1]
\Require Finger tree $T$, previous extremum $e_{\mathrm{prev}}$,
         current direction $\mathit{dir} \in \{+1,-1,0\}$, new value $u$
\Ensure Updated $(T, e_{\mathrm{prev}}, \mathit{dir})$
\State $\mathit{dir}_{\mathrm{new}} \leftarrow \mathrm{sign}(u - e_{\mathrm{prev}})$
\If{$\mathit{dir}_{\mathrm{new}} = 0$}
  \State \Return \Comment{Flat: no extremum}
\EndIf
\If{$\mathit{dir} \neq 0$ \textbf{and} $\mathit{dir}_{\mathrm{new}} \neq \mathit{dir}$}
  \Comment{Reversal: $e_{\mathrm{prev}}$ is confirmed extremum}
  \State $i^* \leftarrow T.\textsc{Search}(e_{\mathrm{prev}})$
    \Comment{$O(\log k)$: find domination boundary}
  \State $(T_{\mathrm{pre}}, T_{\mathrm{suf}}) \leftarrow T.\textsc{Split}(i^*)$
    \Comment{$O(\log k)$: separate dominated suffix}
  \If{$T_{\mathrm{suf}} \neq \emptyset$}
    \Comment{Wipe occurred}
    \State $m_{\mathrm{last}} \leftarrow T_{\mathrm{suf}}.\textsc{ViewFirst}().m$
      \Comment{outermost wiped pair's minimum $= m_{i^*}$}
    \State discard $T_{\mathrm{suf}}$
      \Comment{released as a unit; $O(\log k)$}
  \ElsIf{$T_{\mathrm{pre}} \neq \emptyset$}
    \Comment{No wipe}
    \State $m_{\mathrm{last}} \leftarrow T_{\mathrm{pre}}.\textsc{ViewLast}().m$
      \Comment{top surviving pair's minimum $= m_{i^*-1}$}
  \Else
    \State $m_{\mathrm{last}} \leftarrow \bot$
      \Comment{empty stack: boundary symbol}
  \EndIf
  \State $T \leftarrow T_{\mathrm{pre}}$;\quad
    $T.\textsc{Push}(e_{\mathrm{prev}}, m_{\mathrm{last}})$
    \Comment{$O(\log k)$: insert new pair}
\EndIf
\State $e_{\mathrm{prev}} \leftarrow u$;\quad $\mathit{dir} \leftarrow \mathit{dir}_{\mathrm{new}}$
\State \Return $(T, e_{\mathrm{prev}}, \mathit{dir})$
\end{algorithmic}
\end{algorithm}

\begin{theorem}[Finger-tree stack: $O(\log k)$ exact worst-case]
\label{thm:ftree}
\cref{alg:ftree} maintains an exact, compact $\RC$-minimal
representation and achieves:
\begin{enumerate}
  \item \textbf{Worst-case time per step:} $O(\log k)$, where
    $k = |\stack_n|$ is the current stack depth.  In particular, a
    global wipe ($d = k$ pairs dominated) costs $O(\log k)$, not
    $O(k)$.
  \item \textbf{Output changes:} $\Theta(d)$ per step (matching the
    lower bound of \cref{thm:lower}).
  \item \textbf{Space:} $O(k)$.
  \item \textbf{Exactness:} The output equals $\stack_n$ exactly; no
    approximation error.
\end{enumerate}
\end{theorem}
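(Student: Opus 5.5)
The plan is to prove the four claims in the order exactness, complexity, output changes, space. Exactness comes first because the remaining claims are then statements about the canonical stack $\stack_n$.

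\textbf{Exactness (item 4).} I would argue by induction on $n$ that after each step the sequence stored in $T$, read from bottom to top, equals the stack produced by \cref{alg:stack}, and that $(e_{\mathrm{prev}},\mathit{dir})$ coincide in the two algorithms. Flat steps and monotone steps change nothing in either algorithm. At a reversal from a maximum, the while-loop of \cref{alg:stack} pops exactly the pairs with $M_i<e_{\mathrm{prev}}$. Since $M_1>\cdots>M_k$, these pairs form the suffix $i^*,\dots,k$ with $i^*=\min\{i: M_i<e_{\mathrm{prev}}\}$, which is precisely what \textsc{Search} followed by \textsc{Split} returns.

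The pushed minimum must also agree in each case:
\begin{itemize}
\item If $d\ge 1$ pairs are wiped, \cref{alg:stack} sets $m_{\mathrm{last}}$ to the minimum of the last pair popped, which is the outermost wiped pair, $m_{i^*}$. This equals $T_{\mathrm{suf}}.\textsc{ViewFirst}().m$.
\item If nothing is wiped, \cref{alg:stack} uses $m_k$, which is $T_{\mathrm{pre}}.\textsc{ViewLast}().m$.
\item If the stack is empty, both algorithms use $\bot$.
\end{itemize}
The minimum-reversal branch is symmetric. It uses the $(\max,-\infty)$ annotation on the $m_i$, which increase, and the monotone predicate ``running maximum of the minima exceeds $e_{\mathrm{prev}}$''. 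I would check that this predicate is prefix-monotone, exactly as in the footnote to \cref{def:ftree}.

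\textbf{Minimality.} Given exactness, the logical content of the state is $\stack_n$ together with $O(1)$ extra words. Property (i) of \cref{def:repr} then follows from \cref{thm:charact}, and the state is compact because $T_{\mathrm{suf}}$ is discarded.

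For property (ii) there is a subtlety: the shape of the tree depends on the history of operations, so the raw memory image need not satisfy $K(S_n)\le K(\stack_n)+O(1)$. I would therefore define the representation as the abstract sequence held in $T$. A fixed program can rebuild a canonical finger tree from $\stack_n$, and, conversely, reading $\stack_n$ off any tree is computable. I would also flag $e_{\mathrm{prev}}$ and $\mathit{dir}$ explicitly as auxiliary online state that is not part of $S_n$, in line with \cref{rem:halfpair}.

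\textbf{Time (item 1), which I expect to be the main obstacle.} Each line of \cref{alg:ftree} has the following cost:
\begin{itemize}
\item \textsc{Search} and \textsc{Split} cost $O(\log\min(i^*,k-i^*))\subseteq O(\log k)$ by \citep{Hinze2006};
\item \textsc{ViewFirst} and \textsc{ViewLast} cost $O(1)$;
\item \textsc{Push} costs $O(\log k)$ worst-case.
\end{itemize}

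Two points need care. First, Hinze--Paterson state some of these bounds as amortised under lazy evaluation. I would therefore cite, or briefly justify, the worst-case versions: the $O(\log k)$ worst-case \textsc{Split} on the strict, non-lazy structure, and the fact that a single \textsc{Snoc} can cascade through at most $O(\log k)$ levels of the digit spine. Second, ``discard $T_{\mathrm{suf}}$'' must be $O(1)$ or $O(\log k)$ and not $O(d)$. I would handle this by dropping the root reference. If reclamation has to be accounted for, I would push the detached root onto a free-list and reclaim $O(1)$ nodes per subsequent step. This keeps the worst-case cost $O(\log k)$ without affecting the logical state. The step cost is then a sum of a constant number of $O(\log k)$ terms.

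\textbf{Output changes (item 2) and space (item 3).} In the output-change metric of \cref{def:ocm}, the step logically deactivates exactly the $d$ pairs of $T_{\mathrm{suf}}$ and activates one new pair. The cost is therefore $d+1$, that is $\Theta(d)$ whenever $d\ge 1$, which matches \cref{thm:lower}.

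A 2--3 finger tree with $k$ leaves has $O(k)$ internal nodes. Each node carries an $O(1)$-size product-monoid annotation, so the live space is $O(k)$. Pending free-list nodes never exceed the total number of nodes ever allocated minus those already reclaimed, and this can be kept $O(k)$ by reclaiming two nodes per step.
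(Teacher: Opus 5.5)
Your exactness, time, output-change and live-space arguments are the paper's own. You use the same three-case check that $m_{\mathrm{last}}$ agrees with \cref{alg:stack} on the suffix located by \textsc{Search}/\textsc{Split}, and the same operation-by-operation $O(\log k)$ accounting. Three of your additions fix points the paper passes over: citing worst-case rather than lazily amortised finger-tree bounds, counting $d+1$ output changes, and noting that the history-dependent tree shape can break $K(S_n)\le K(\stack_n)+O(1)$ unless $S_n$ is taken to be the abstract sequence.

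The step that fails is the free-list accounting for item~3. Once reclamation is charged per node to the step cost, you cannot keep the pending nodes at $O(k)$, with $k$ the current depth. This holds for any reclamation schedule compatible with item~1, not just two nodes per step. Run the adversarial input of \cref{thm:lower} to depth $K$ and then take a global maximum. The suffix $T_{\mathrm{suf}}$, with $\Theta(K)$ nodes, is detached and the depth drops to $k=1$. The wipe step may spend only $O(\log K)$ time. If a monotone run follows, each later step is allowed only $O(1)$ time. So $\Theta(K)$ allocated nodes survive for $\Omega(K)$ steps while $k=1$, and under this accounting items~1 and~3 cannot both hold.

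The paper avoids this by convention: the suffix is released ``in one pointer update,'' and space counts only nodes reachable from $T$. You have two ways to repair the step:
\begin{itemize}
\item adopt that convention explicitly, in which case the free-list is unnecessary and your proof coincides with the paper's; or
\item keep per-node reclamation and weaken item~3 to a peak-depth bound such as $O(\max_{t\le n}|\stack_t|)$, with the per-step reclamation rate chosen to exceed the per-step allocation.
\end{itemize}
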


\begin{proof}
\textbf{Time.}
The \textsc{Search} step finds $i^*$ in $O(\log k)$ by monotone
(measure-guided) search on the finger tree \citep{Hinze2006}.  The
\textsc{Split} step separates the dominated suffix $[i^*, k]$ in
$O(\log k)$ by splitting the 2--3 tree at position $i^*$ --- no
per-element traversal: the split visits $O(\text{height}) =
O(\log k)$ nodes, and the $d$ nodes of $T_{\mathrm{suf}}$ are
subsequently released as a unit (collected lazily by the memory
manager).  This distinguishes a finger tree from a plain array or
linked list, where suffix removal requires $\Theta(d)$ individual
operations.  \textsc{ViewFirst} on $T_{\mathrm{suf}}$ and
\textsc{ViewLast} on $T_{\mathrm{pre}}$ are end-access operations,
$O(1)$ amortised and $O(\log k)$ worst-case.  \textsc{Push} costs
$O(\log k)$ worst-case.  Total per step: $O(\log k)$.

\textbf{Correctness (equivalence with \cref{alg:stack}).}
The value $m_{\mathrm{last}}$ computed by \cref{alg:ftree} coincides
with that of \cref{alg:stack} in all three cases.  If a wipe occurs
($T_{\mathrm{suf}} \neq \emptyset$), the while-loop of
\cref{alg:stack} pops pairs $k, k-1, \ldots, i^*$ in order, leaving
$m_{\mathrm{last}} = m_{i^*}$ --- the outermost wiped pair's minimum,
which is exactly $T_{\mathrm{suf}}.\textsc{ViewFirst}().m$.  If no
wipe occurs and the stack is non-empty, \cref{alg:stack} initialises
$m_{\mathrm{last}} = m_k$ of the top surviving pair, which is
$T_{\mathrm{pre}}.\textsc{ViewLast}().m$ with $i^* = k+1$.  If the
stack is empty, both algorithms use the boundary symbol $\bot$.  The
minimum-reversal branch (omitted from \cref{alg:ftree} for brevity)
is symmetric, with the roles of $M$ and $m$ exchanged; its
\textsc{Search} runs over the strictly \emph{increasing} minima
$m_i$ and is guided by the second component of the product-monoid
annotation (the subtree maximum of the $m_i$; see the footnote to
\cref{def:ftree}), so no separate tree is needed --- both branches
operate on the same structure in $O(\log k)$.

\textbf{Output changes.}
The $d$ pairs in the discarded suffix change logical status from
active to inactive: $\Theta(d)$ output changes, matching
\cref{thm:lower}.  The \textsc{Split} achieves this in $O(\log k)$
memory operations because the tree represents the suffix implicitly
--- the root of the discarded subtree is released in one pointer
update.

\textbf{Space.} The 2--3 finger tree uses $O(k)$ nodes for $k$ pairs.

\textbf{Exactness.} The output $T$ represents $\stack_n$ exactly
after each step; no $\varepsilon$-threshold is applied.
\end{proof}

\begin{remark}[Comparison with buffer-based approaches]
A lazy-propagation buffer achieves $O(\log n)$ worst-case per
memory-operation step but incurs $\Theta(k)$ output changes during a
flush when a new global maximum dominates all existing pairs, and
retains dead state that violates compactness.  The finger-tree
approach achieves $O(\log k)$ both in memory operations \emph{and} is
exact --- at the cost of a more complex data structure.
\end{remark}

\subsection{Relation to the de-amortisation literature}

The standard technique for de-amortising data structures is to spread
the work of expensive operations across future cheap ones
\citep{Hood1981,Sleator1985}.  \cref{thm:lower} shows that
de-amortisation of the \emph{deletion} cost is fundamentally
impossible for compact Preisach stacks: the physical removal of $d$
dominated pairs requires $\Theta(d)$ output changes regardless of any
search strategy, and cannot be deferred without violating
compactness.

\cref{thm:binsearch} shows, however, that the wiping property
\emph{does} provide structural information that reduces the
\emph{search} cost: monotone ordering of maxima
$M_1 > \cdots > M_k$ enables $O(\log k)$ boundary detection.
\cref{thm:ftree} shows that a finger tree \citep{Hinze2006} achieves
$O(\log k)$ worst-case for \emph{both} search and deletion via the
\textsc{Split} operation, which discards a dominated suffix in
$O(\log k)$ time without traversing individual nodes.  This contrasts
with binary search trees where de-amortisation is possible
\citep{Brodal1996} but typically requires approximate or
probabilistic techniques; the finger tree achieves exact,
deterministic $O(\log k)$ worst-case.

The three-level picture (\cref{tab:summary}) is therefore sharp:
\begin{itemize}
  \item Array with truncation: $O(1)$ memory ops, $\Theta(k)$ output
    changes worst-case --- optimal for throughput.
  \item Binary search on array: $O(\log k + d)$ memory ops, same
    output changes --- reduces search cost only.
  \item Finger tree (new): $O(\log k)$ memory ops, same output
    changes --- optimal worst-case latency, exact, no approximation.
\end{itemize}

\begin{table}[t]
\centering
\caption{Complexity of Preisach stack variants.
$k$ = current depth; $d$ = output changes in one step.
Memory ops = RAM operations; Output changes = logical status changes.}
\label{tab:summary}
\small
\begin{tabular}{lcccc}
\toprule
Representation & Mem.\ ops/step & Output changes/step
  & Amortised & $K_{\RC}$-minimality \\
\midrule
Array (truncation) & $O(1)$ & $\Theta(d)$ & $O(1)$ & Exact \\
Binary search (array) & $O(\log k + d)$ & $\Theta(d)$
  & $O(\log n)$ & Exact \\
\textbf{Finger tree (new)} & $\mathbf{O(\log k)}$ & $\Theta(d)$
  & $O(\log k)$ & \textbf{Exact} \\
\bottomrule
\end{tabular}
\end{table}

\subsection{Practical implications}

\Cref{tab:summary} leads to a direct engineering decision:
\emph{choose the array stack for throughput; choose the finger-tree
stack for latency guarantees}.  Both are exact --- the approximation
trade-off has been eliminated.

\textbf{PAL KV-cache replacement.}
Replacing the KV-cache in a Preisach attention architecture
\citep{Frydrych2026PAL} with the extremum stack reduces memory from
$O(n \cdot d_{\mathrm{model}})$ to $O(k \cdot d_{\mathrm{model}})$.
\Cref{thm:kolmo,thm:minimal} guarantee that no smaller replacement
exists for the rate-independent function class; the finger-tree
variant (\cref{thm:ftree}) adds a worst-case $O(\log k)$ per-token
latency guarantee with no change in model output.  On edge hardware
with per-token latency budgets, this eliminates the $\Theta(k)$
latency spikes that occur when a new global extremum wipes the
entire stack.

\textbf{Signal-profile-dependent choice.}
For slowly varying signals ($k \ll n$, rare global wipes), the array
stack suffices: amortised $O(1)$ dominates and the adversarial case
never occurs in practice.  For signals with frequent global extrema
(flash crashes, mechanical impacts, epileptic EEG bursts), the finger
tree eliminates latency spikes without any loss of accuracy.  The
choice is purely about latency requirements, not accuracy.

\textbf{Streaming compression and real-time systems.}
Beyond sequence modelling, the results apply to any system that
processes rate-independent streams online.  Latency is identified as
a critical constraint in IoT time-series compression
\citep{Chiarot2023}; for devices with strict per-update latency
budgets, the finger tree reduces worst-case update cost from
$\Theta(k)$ to $O(\log k)$ memory operations --- a factor of
$\sim 700$ for $k = 10^4$ extrema --- while preserving the Kolmogorov
optimality of \cref{thm:kolmo} exactly.  For hard real-time systems
(PLC controllers, SCADA), the finger tree bounds processing latency
to $O(\log k)$ regardless of how many relay pairs are logically
wiped.

\section{Estimation Implication}
\label{sec:estimation}

The classical approach to estimating $\mu$ from observations
$Y_{0:n}$ is non-negative least squares (NNLS)
\citep{Mayergoyz1991}:
\begin{equation}
  \hat\mu = \arg\min_{\mu\geq 0}
  \sum_{t=0}^n (Y_t - \mathcal{P}_\mu[u](t))^2.
  \label{eq:nnls}
\end{equation}

\begin{proposition}[Estimation via the stack process]
\label{prop:estimation}
The NNLS system~\eqref{eq:nnls} can be assembled from the
\emph{stack process} $(\stack_t)_{t=0}^n$ without retaining
$u_{0:n}$:
\begin{enumerate}
  \item At each time $t$, $\mathcal{P}_\mu[u](t) = F_\mu(\stack_t)$
    by \cref{thm:charact}, so each residual
    $(Y_t - F_\mu(\stack_t))$ is computable from $\stack_t$ and
    $\mu$.
  \item The stack $\stack_t$ is maintained online in $O(k_t)$ space,
    where $k_t = |\stack_t|$, with per-step cost as in
    \cref{tab:summary}: $O(1)$ amortised (array) or $O(\log k_t)$
    worst-case (finger tree, \cref{thm:ftree}).
  \item The relay state lookup $F_\mu(\stack_t)$ costs $O(L^2)$ per
    step.  The NNLS Gram matrix has size $L^2 \times L^2 = O(L^4)$,
    independent of $n$; this is the dominant memory cost and is
    unchanged by the stack approach.  The saving is in \emph{input
    history storage}: the full input history $u_{0:n}$ requires
    $O(n)$ space, whereas the stack process requires
    $O(\max_t k_t)$ space --- a reduction by a factor of
    $n/\max_t k_t$, which can be $\Omega(n)$ for slowly varying
    signals.
\end{enumerate}
\end{proposition}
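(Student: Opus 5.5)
The plan is to prove the three items in turn. Item~1 rests on \cref{thm:charact} applied at each time separately. Item~2 restates the costs already established for \cref{alg:stack} and \cref{thm:ftree}. Item~3 is a counting argument over a discretised threshold triangle.

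For item~1, fix $t\le n$ and apply \cref{thm:charact} to the prefix $u_{0:t}$, as in \cref{cor:suff_process}. This requires checking that $u\mapsto\mathcal{P}_\mu[u](t)$ lies in $\RC$ when $\mu$ is fixed. Each relay $\relay{\alpha}{\beta}$ switches only when the input crosses $\alpha$ or $\beta$. Inserting or deleting repeated or non-extremal samples via a strictly increasing $\phi$ does not change which thresholds have been crossed last. Hence the relay state at the final time is invariant under precomposition with $\phi$, and so is the integral.

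On $\GL$ the triangle $\T$ is discretised into $O(L^2)$ cells. I would therefore treat $\mu$ as a finite nonnegative vector, which makes the integral a finite sum. Computability then follows, and \cref{thm:charact} yields $F_\mu$. Concretely, I would write $F_\mu(\stack_t)$ explicitly via the staircase of \cref{fig:overview}. Relay $(\alpha,\beta)$ is on iff the point lies above the staircase interface determined by $\stack_t$ and the current input $u_t$ (\cref{rem:halfpair}). The residual $Y_t-F_\mu(\stack_t)$ is therefore computable from $\stack_t$ and $\mu$.

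Here lies the main obstacle. \cref{rem:halfpair} says the exact relay interface needs $u_t$ in addition to $\stack_t$. The resolution I would use is that a stack process stored online can carry $u_t$, namely the running $e_{\mathrm{prev}}$, at $O(1)$ extra cost. Equivalently, $u_t$ can be read off the characterisation, since $F[u](t)=u_t$ is itself rate-independent.

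Item~2 is immediate. The space bound $O(k_t)$ is the depth of the stored stack. The per-step cost is $O(1)$ amortised by the push/pop charging argument for \cref{alg:stack}, and $O(\log k_t)$ worst case by \cref{thm:ftree}.

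For item~3, I would parametrise $\mu=\sum_{c}\mu_c\mathbf{1}_c$ over the $O(L^2)$ cells. This gives $\mathcal{P}_\mu[u](t)=\langle \mu, r_t\rangle$, where $r_t\in\{0,1\}^{O(L^2)}$ is the relay-state vector. Each $r_t$ is obtained from $\stack_t$ by marking the cells above the staircase, which takes $O(L^2)$ time. The NNLS objective is then $\|Y-R\mu\|^2$, with normal-equation data $R^\top R=\sum_t r_tr_t^\top$ of size $O(L^4)$ and $R^\top Y=\sum_t Y_tr_t$. Both are accumulated as running sums, so no past $r_t$ or $u_t$ needs to be stored. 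The memory comparison is then $O(n)$ for the raw history versus $O(\max_t k_t)$ for the live stack, plus the $n$-independent $O(L^4)$ Gram matrix. This gives the stated reduction factor $n/\max_t k_t$, which is $\Omega(n)$ whenever $\max_t k_t=O(1)$.
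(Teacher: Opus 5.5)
Your outline follows the paper's: item~1 from \cref{thm:charact} at each $t$, item~2 by citing \cref{alg:stack} and \cref{thm:ftree}, and item~3 by counting $O(L^2)$ relay cells and an $O(L^4)$ Gram matrix. Your running sums $R^\top R=\sum_t r_tr_t^\top$ and $R^\top Y=\sum_t Y_tr_t$ are a real addition. They are what actually justifies ``without retaining $u_{0:n}$'', which the paper leaves implicit.

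One step must go: the fallback that $u_t$ ``can be read off the characterisation, since $F[u](t)=u_t$ is rate-independent''. \Cref{alg:stack} pushes only at a direction reversal, so every strictly monotone prefix has an empty stack. For example, $(0,\Delta)$ and $(0,2\Delta)$ share $\stack_1=\emptyset$ but have different final values, and $\mathcal{P}_\mu[u](1)$ differs between them for suitable $\mu$. So neither $u_t$ nor $\mathcal{P}_\mu[u](t)$ is a function of $\stack_t$ alone, which is exactly why \cref{rem:halfpair} needs $u_t$.

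The same example shows that item~1 as literally stated (residual computable from $\stack_t$ and $\mu$) cannot be taken at face value, and neither can the paper's one-line appeal to \cref{thm:charact} for it. Your primary resolution proves the correct form: the residual is computable from $(\stack_t,u_t)$ and $\mu$, with $u_t=e_{\mathrm{prev}}$ stored at $O(1)$ extra cost. That is all the rest of the proposition needs.

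I would drop the detour through \cref{thm:charact} entirely. Your check that $\mathcal{P}_\mu\in\RC$ also has a hole: it only covers $\phi$ that delete interior non-extremal samples. \Cref{def:ri} quantifies over every strictly increasing $\phi$, and deleting an extremum does change relay states. For instance, passing from $(0,1,u_2)$ to $(0,u_2)$ flips any relay with $u_2<\alpha<1$ and $0<\beta<u_2$. Instead, base item~1 directly on your explicit staircase construction from $(\stack_t,e_{\mathrm{prev}})$, i.e.\ on the classical wiping-out representation. Items~2 and~3 then go through as you wrote them.
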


\begin{remark}[Final stack vs.\ stack process]
The space reduction comes from \emph{online stack maintenance}
(replacing the raw input history with a stack of alternating extrema
at each step), \emph{not} from compressing the trajectory to the
final stack $\stack_n$ alone.  The number of NNLS equations remains
$n$; what is reduced is the memory required to evaluate each
equation (cf.\ \cref{rem:final_vs_process}).
\end{remark}

\section{Discussion and Conclusion}
\label{sec:discussion}

\subsection{Three characterisations of one object}

The results form a closed picture.  \Cref{thm:charact} says the stack
is \emph{qualitatively} right: rate-independence \emph{is} stack
factorisation.  \Cref{thm:kolmo,thm:minimal} say it is
\emph{quantitatively} right, in the two standard senses of
information minimality --- per-instance and in expectation --- and
\cref{cor:unique} says it is essentially the \emph{only} such object.
\Cref{thm:lower,thm:ftree} say the minimal object is also
\emph{cheap}: the unavoidable per-step cost is $\Theta(d)$ output
changes, and everything else can be brought down to $O(\log k)$
worst-case, exactly.

The two minimality results are complementary in a precise sense
(\cref{rem:two-minimalities}): Kolmogorov minimality holds worst-case
for every individual sequence; Shannon minimality holds in
expectation under any probability measure on the input.  Neither
implies the other, but both identify $\stack_n$ as the canonical
information-minimal representation for $\RC$, and both proofs are
powered by the same finite indicator family.  The complexity analysis
then quantifies the cost of maintaining this canonical object online:
compact exact representations incur $\Theta(k)$ output changes per
step in the adversarial case --- unavoidable in any model; binary
search reduces boundary detection to $O(\log k)$ comparisons but
leaves deletion at $\Theta(d)$ per step; a finger-tree implementation
achieves $O(\log k)$ worst-case for both search and deletion via the
\textsc{Split} operation, while maintaining exact $\RC$-minimality ---
the first Preisach stack implementation with a non-trivial worst-case
guarantee and no approximation error.

\subsection{Open questions}

\begin{enumerate}
  \item \emph{Approximate representations.}  The output-change lower
    bound of \cref{thm:lower} extends to randomised algorithms for
    exact representations (\cref{rem:randomized}).  Can
    \emph{approximate} compact representations --- e.g.\ retaining
    only pairs with loop amplitude $M_i - m_i \geq \varepsilon$ ---
    achieve $o(k)$ output changes per step, and what is the resulting
    error on the Preisach output as a function of $\varepsilon$ and
    $\mu$?
  \item \emph{Space constants.}  What is the optimal \emph{space}
    overhead of the finger-tree stack relative to the array stack?
    The 2--3 finger tree uses $O(k)$ nodes with a constant factor
    larger than a plain array; can a more cache-friendly layout
    reduce this constant?
  \item \emph{Vector extension.}  Does an analogue of
    \cref{thm:kolmo,thm:minimal,thm:lower,thm:ftree} hold for the
    vector Preisach operator \citep{Brokate1996,Frydrych2019} with
    two-dimensional input signals?  The joint representation
    $(\stack^x_n, \stack^y_n)$ of two independent stacks is a
    computable pairing, so the overhead for answering all vector
    $\RC$-queries should remain $O(1)$ by the same invariance
    argument; the joint finger tree would support
    $O(\log(k_x + k_y))$ worst-case \textsc{Split} on the product
    space.  Does this extend to the full vector Preisach measure
    $\mu(\alpha,\beta,\theta)$ over directions $\theta \in [0,2\pi)$?
  \item \emph{Trajectory sufficiency.}  Can the stack-process
    sufficiency (\cref{cor:suff_process}) be given a formal
    Fisher--Neyman characterisation for $\mu$-estimation with full
    trajectory data?
\end{enumerate}

\section*{Declaration on the use of generative AI}
During the preparation of this work the author used AI-assisted
tools for grammar checking and \LaTeX\ formatting.
All mathematical content, proofs, and claims are the sole
responsibility of the author.

\section*{Declaration of competing interests}
The author declares no competing financial or non-financial
interests.

\section*{Acknowledgements}
The author thanks the editors and reviewers of earlier, shorter
versions of this material for comments that led to the present
unified presentation, and in particular for remarks that sharpened
the statement of the indicator-family lemma (\cref{lem:ri}).

\bibliographystyle{elsarticle-num}
\bibliography{references_unified}

\end{document}